  \providecommand\BibTeX{{%
    \normalfont B\kern-0.5em{\scshape i\kern-0.25em b}\kern-0.8em\TeX}}}
\theoremstyle{acmdefinition}
\newtheorem{remark}[theorem]{Remark}}
\begin{document}

\title{On Minimal and Minimum Cylindrical Algebraic Decompositions}


\author{Lucas Michel}
\affiliation{%
  \institution{University of Liège}
  \country{Belgium}}
\email{lucas.michel@uliege.be}

\author{Pierre Mathonet}
\affiliation{%
  \institution{University of Liège}
  \country{Belgium}}
\email{p.mathonet@uliege.be}

\author{Naïm Zénaïdi}
\affiliation{%
  \institution{University of Liège}
  \country{Belgium}}
\email{nzenaidi@uliege.be}


\begin{abstract}
We consider cylindrical algebraic decompositions (CADs) as a tool for representing semi-algebraic subsets of $\mathbb{R}^n$. In this framework, a CAD $\mathscr{C}$ is adapted to a given set $S$ if $S$ is a union of cells of $\mathscr{C}$. 
Different algorithms computing an adapted CAD may produce different outputs, usually with redundant cell divisions. In this paper we analyse the possibility to remove the superfluous data. More precisely we consider the set CAD$(S)$ of CADs that are adapted to $S$, endowed with the refinement partial order and we study the existence of minimal and minimum elements in this poset.

We show that for every semi-algebraic set $S$ of $\mathbb{R}^n$ and every CAD $\mathscr{C}$ adapted to $S$, there is a minimal CAD adapted to $S$ and smaller (i.e. coarser) than or equal to $\mathscr{C}$.
Moreover, when $n=1$ or $n=2$, we strengthen this result by proving the existence of a minimum element in CAD$(S)$. Astonishingly for $n \geq 3$, there exist semi-algebraic sets whose associated poset of adapted CADs does not admit a minimum. We prove this result by providing explicit examples. 
We finally use a reduction relation on CAD$(S)$ to define an algorithm for the computation of minimal CADs. We conclude with a characterization of those semi-algebraic sets $S$ for which CAD$(S)$ has a minimum by means of confluence of the associated reduction system.
\end{abstract}

\begin{CCSXML}
<ccs2012>
   <concept>
       <concept_id>10010147.10010148.10010164</concept_id>
       <concept_desc>Computing methodologies~Representation of mathematical objects</concept_desc>
       <concept_significance>500</concept_significance>
       </concept>
 </ccs2012>
\end{CCSXML}

\ccsdesc[500]{Computing methodologies~Representation of mathematical objects}
\keywords{Semi-algebraic set; cylindrical algebraic decomposition; partially ordered set; minimal and minimum element; abstract reduction system}



\maketitle

\vspace{3cm}
\section{Introduction}

Cylindrical algebraic decomposition (CAD) is a fundamental tool in computer algebra and in real algebraic geometry. Since its introduction by Collins \cite{collins1975} in the framework of real quantifier elimination, the scope of application of CAD and the concept itself have evolved in various directions. Algorithms computing CADs are constantly refined \cite{LazardStyle23, McCallum98, Brown01, RegularChains} and are nowadays used as building blocks in various fields where symbolic manipulations of polynomial equalities and inequalities are relevant.

From a theoretical standpoint, the existence of a CAD algorithm may also be seen as a structure theorem for semi-algebraic sets not only giving topological information about them (see for example \cite{bochnaketal1998, schwartzsharir83,basu2007}) but also providing a constructive and explicit representation of these subsets enabling symbolic and effective reasoning. In this framework, a CAD is said to be adapted to (or represents) a semi-algebraic set $S \subset \mathbb{R}^n$ if $S$ is a union of cells from the CAD.  

In practice, for computing a CAD adapted to a semi-algebraic set $S$, one describes $S$ by polynomial inequalities (conjunction, disjunction, and negation) associated with polynomials in a finite set $\mathcal{F} \subset \mathbb{Z}[x_1,\ldots,x_n]$. The CAD algorithms usually take $\mathcal{F}$ as input and the output CAD is therefore adapted to any semi-algebraic set defined by means of $\mathcal{F}$. For this reason they usually contain unnecessary cells for the description of $S$. Moreover, these outputs also depend on the choice of $\mathcal{F}$ and on the considered algorithm. It is then natural to seek for a method for simplifying a given CAD, while retaining only the relevant information needed to describe $S$. 

More precisely, we study in detail the set CAD$(S)$ of CADs (with respect to a fixed variable ordering) that are adapted to $S$.  
This set is naturally endowed with the partial order $\preceq$ defined by refinement: for $\mathscr{C}, \mathscr{C}' \in \text{CAD}(S)$, we write $\mathscr{C} \preceq \mathscr{C}'$ if every cell of $\mathscr{C}$ is a union of cells of $\mathscr{C}'$. Here, the main problems\footnote{These problems were already mentioned in the same framework in \cite{wilson,locatelli}.} under consideration are the existence and the effective construction of minimal and minimum elements of the poset $\text{CAD}(S)$. 
These problems, theoretical in nature, lead us to analyse in detail the more practical question of simplifying CADs by merging cells (see Section \ref{sec:red}).
Finally, the minimum number of cells of these minimal CADs provide a tight lower bound on the number of cells that a CAD adapted to a semi-algebraic set must contain and could therefore be used to benchmark CAD algorithms (see for example \cite{NuCAD, MLCAD}).  

In Section \ref{sec:minimal}, we develop the first results on minimal CADs. We show in particular that for every semi-algebraic set $S$ of $\mathbb{R}^n$ and every CAD $\mathscr{C}$ adapted to $S$, there is a minimal CAD adapted to $S$ and smaller (i.e. coarser) than or equal to $\mathscr{C}$. This elementary result settles the question of the existence of minimal CADs adapted to $S$.  
Section \ref{sec:minimum} proposes a thorough investigation of minimum CADs. If $n=1$ or $n=2$, we prove the existence of a minimum element in CAD$(S)$ for every semi-algebraic $S$ of $\mathbb{R}^n$. We actually use results concerning minimal CADs obtained in Section \ref{sec:minimal} to show that such CADs must coincide. On the contrary, for $n \geq 3$, we show that there exist semi-algebraic sets whose associated poset of adapted CADs does not admit a minimum by providing explicit examples. 
In Section \ref{sec:red}, we study a reduction relation on the poset $\text{CAD}(S)$. We first use it to define an algorithm that constructs a minimal CAD adapted to a given semi-algebraic set $S$ and smaller than or equal to a given CAD $\mathscr{C}$. This algorithm may be used as a post-processing operation of any CAD algorithm. 
Finally, we employ this relation to characterize those semi-algebraic sets $S$ for which there exists a minimum adapted CAD. More precisely, we prove that $\text{CAD}(S)$ admits a minimum if and only if the associated reduction system is confluent.

\section{Background and notation}
We consider a positive integer $n \in \mathbb{N}^*$ and CADs of $\mathbb{R}^n$ with respect to a fixed variable ordering. Since the cells of such CADs are defined inductively from cells of CADs of $\mathbb{R}^k$ ($k \leq n$) which are indexed by $k$-tuples, it is convenient to make tuple notation more concise. We identify the $k$-tuple $I=(i_1,\ldots,i_k)\in\mathbb{N}^k$ with the corresponding word $i_1\ldots i_k$. We say that $I$ is odd (resp. even) if $i_k$ is odd (resp. even). We denote by $\varepsilon$ the empty tuple, which corresponds to the empty word. For $j\in \mathbb{N}$ we denote by $I:j$ the $k+1$ tuple $(i_1,\ldots,i_k,j)$. 

\begin{definition}[see \cite{Arnon,basu2007}]\label{def:cad}
    A cylindrical algebraic decomposition (CAD) of $\mathbb{R}^n$ is a sequence $\mathscr{C} = (\mathscr{C}_1,\ldots,\mathscr{C}_n)$ such that  
    for all $k \in \{1,\ldots,n\}$, the set  $$\mathscr{C}_k= \left\{C_{i_1 \cdots i_k} |  \forall j \in \{1,\ldots,k\}, i_j \in \{1,\ldots,2u_{i_1\cdots i_{j-1}} +1\}\right\}$$
     is a finite semi-algebraic partition of $\mathbb{R}^k$ defined inductively by the following data:
    \begin{enumerate}[leftmargin=0.35cm]
        \item[$\bullet$] there exists a natural number $u_\varepsilon \in \mathbb{N}$ and real algebraic numbers\footnote{\label{note:none}Potentially none if $u_\varepsilon = 0$.
        }
        $\xi_{2} < \xi_{4} < \ldots <\xi_{2u_\varepsilon}$
        that define exactly all cells of $\mathscr{C}_{1}$ by
        \[C_{2j} = \{\xi_{2j}\},\, (1 \leq j \leq u_\varepsilon),\quad
             C_{2j+1}= (\xi_{2j}, \xi_{2(j+1)}),\,(0 \leq j \leq u_\varepsilon)
         \]
         with the convention that $\xi_{0} = -\infty$ and $\xi_{2u_\varepsilon+ 2} = +\infty$;
        \item[$\bullet$] for each cell $C_I \in \mathscr{C}_k$ ($k < n$), there exists a natural number $u_I \in \mathbb{N}$ and semi-algebraic continuous functions\footnote{Potentially none. See also Proposition \ref{ex:trousers1} for explicit examples of such CADs.}
        $\xi_{I: 2 } < \xi_{I:4} < \ldots <\xi_{I:2 u_I} : C_I \to \mathbb{R}$
        that define exactly all cells of $\mathscr{C}_{k+1}$ by
        {\small\begin{align*}
             C_{I:2j} &= \{(\textbf{a}, b) \in C_I \times \mathbb{R} \; | \; b = \xi_{I:2j}(\textbf{a})\} , \quad (1 \leq j \leq u_I),\\
             C_{I:2j+1} &= \{(\textbf{a}, b) \in C_I \times \mathbb{R} \; |
\;\xi_{I:2j}(\textbf{a}) < b < \xi_{I:2(j+1)}(\textbf{a})\},\quad (0 \leq j \leq u_I)
         \end{align*}}
         with the convention that $\xi_{I:0} = -\infty$ and $\xi_{I:2u_I 
+ 2} = +\infty$.
    \end{enumerate}
     We say that the element $C_{I}$ of $\mathscr{C}_k$
is a cell of index $I$. If $I$ is odd (resp. even), we say that this cell is a sector (resp. a section). When the context is clear, we identify $\mathscr{C}$ with $\mathscr{C}_n$.
\end{definition}

We usually denote by $S$ a semi-algebraic set of $\mathbb{R}^n$ and we denote by $S^c$ its complement in $\mathbb{R}^n$.

\begin{definition}
  We say that a CAD $\mathscr{C}$ is adapted to $S$ if $S$ is a union of cells of $\mathscr{C}$. We denote by $\text{CAD} (S)$ the set of all CADs adapted to~$S$.
\end{definition}

\begin{remark}\label{rem:Collins}
    Let $p \in \mathbb{N}^*$. A direct consequence of Collins' seminal theorem \cite{collins1975} tells us that if $S_1,\ldots,S_p \subset \mathbb{R}^n$ are semi-algebraic sets, then there exists a CAD adapted to all of them simultaneously, i.e. $\text{CAD}(S_1) \cap \ldots \cap \text{CAD}(S_p)$ is never empty.
\end{remark}

Since CADs are partitions, we naturally use the refinement order defined on the set of partitions to compare CADs.  
\begin{definition}\label{def:order}
    Let  $\mathscr{C}$ and $\mathscr{C}'$ be two CADs of $\mathbb{R}^n$. We say that $\mathscr{C}'$ is a refinement of $\mathscr{C}$ if every cell of $\mathscr{C}$ is a union of cells of $\mathscr{C}'$. We also say that $\mathscr{C}$ is smaller than or equal to $\mathscr{C}'$ and write $\mathscr{C} \preceq \mathscr{C}'$.
    \end{definition}
    This definition is already used in \cite{BrownSimple}, where $\mathscr{C}$ is said to be simpler than $\mathscr{C}'$ when $\mathscr{C} \prec \mathscr{C}'$ and where algorithms to compute simple CADs with respect to this order are devised.
We also recall the definition of minimal and minimum elements. 
    \begin{definition}
        A minimal CAD adapted to $S$ is a minimal element of ($\text{CAD}(S),\preceq$). In other words, $\mathscr{C} \in \text{CAD}(S)$ is a minimal CAD adapted to $S$ if 
        \[\forall \mathscr{C}' \in \text{CAD}(S), (\mathscr{C}'\preceq \mathscr{C})  \implies (\mathscr{C}' =  \mathscr{C}).\] 
        A minimum CAD adapted to $S$ is a minimum element of ($\text{CAD}(S),\preceq$). Namely,  $\mathscr{C} \in \text{CAD}(S)$ is a minimum CAD adapted to $S$ if 
    \[\forall \mathscr{C}' \in \text{CAD}(S), \mathscr{C}'\succeq \mathscr{C}.\]     
        We say that $S$ admits a minimum CAD if there exists a minimum CAD adapted to $S$.
    \end{definition}

\section{Minimal CAD{\small s}}\label{sec:minimal}
In this section we show the existence of minimal CAD adapted to any semi-algebraic set $S$. Then, we show the equivalence between the uniqueness of a minimal CAD and the existence of a minimum CAD. Finally, we give uniqueness results concerning the set of sections of the last level of minimal CADs. 

\begin{proposition}\label{prop:existMin}
If $\mathscr{C}$ is a CAD adapted to $S$, then there exists a minimal CAD adapted to $S$ that is smaller than or equal to $\mathscr{C}$. 
\end{proposition}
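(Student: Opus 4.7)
The plan is to reduce to a finiteness argument on the principal down-set of $\mathscr{C}$ in $(\text{CAD}(S),\preceq)$. Set
\[
P \;=\; \{\mathscr{C}' \in \text{CAD}(S) \;:\; \mathscr{C}' \preceq \mathscr{C}\}.
\]
Since $\mathscr{C} \in P$, the set $P$ is nonempty. Any minimal element of $P$ will automatically be minimal in $\text{CAD}(S)$: if $\mathscr{C}'' \in \text{CAD}(S)$ satisfies $\mathscr{C}'' \preceq \mathscr{C}^*$ for some minimal $\mathscr{C}^* \in P$, then transitivity of $\preceq$ yields $\mathscr{C}'' \preceq \mathscr{C}$, so $\mathscr{C}'' \in P$, and minimality inside $P$ forces $\mathscr{C}'' = \mathscr{C}^*$. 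So it is enough to produce a minimal element of $P$.

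The key step is to show that $P$ is finite, which makes the existence of a minimal element automatic. If $\mathscr{C}' \preceq \mathscr{C}$, then by Definition \ref{def:order} each cell of $\mathscr{C}'$ is a union of cells of $\mathscr{C}$. Since $\mathscr{C}$ has finitely many cells, the underlying partition of $\mathbb{R}^n$ associated with $\mathscr{C}'$ is determined by an equivalence relation on the finite set of cells of $\mathscr{C}$, of which there are only finitely many. Moreover, from Definition \ref{def:cad} a CAD is completely determined by its underlying partition (the cylindrical indexing and the defining functions $\xi_{I:2j}$ are recovered from the cells themselves). Hence the assignment $\mathscr{C}' \mapsto \mathscr{C}'_n$ is injective, and $P$ is finite.

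Any nonempty finite poset has a minimal element: starting from any $\mathscr{C}_0 \in P$ and iteratively replacing $\mathscr{C}_i$ by a strictly smaller element of $P$ whenever one exists produces a strictly descending chain, which must terminate by finiteness on some $\mathscr{C}^* \in P$ with no strictly smaller element in $P$. This $\mathscr{C}^*$ is the sought minimal CAD adapted to $S$ and satisfies $\mathscr{C}^* \preceq \mathscr{C}$.

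The only delicate point is the claim that two CADs $\mathscr{C}', \mathscr{C}''$ producing the same set-theoretic partition of $\mathbb{R}^n$ must coincide as CADs; this is where one uses that the variable ordering is fixed and that the cell indices, the integers $u_I$, and the defining functions $\xi_{I:2j}$ in Definition~\ref{def:cad} are uniquely reconstructed from the cells themselves. All other steps (existence of a minimal element in a finite nonempty poset, transitivity of refinement, transfer of minimality from $P$ to $\text{CAD}(S)$) are purely order-theoretic and immediate.
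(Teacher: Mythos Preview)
Your proof is correct and follows essentially the same approach as the paper: both arguments rest on the finiteness of the set of partitions coarser than $\mathscr{C}$ and extract a minimal element via a descending chain. You are a bit more explicit than the paper in verifying that minimality in the principal down-set $P$ transfers to minimality in $\text{CAD}(S)$ and in noting that a CAD is determined by its top-level partition, but these are elaborations of the same idea rather than a different route.
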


\begin{proof}
The set  $\text{SSP}(\mathscr{C})$ of partitions that are strictly smaller than $\mathscr{C}$ is a finite poset, by definition. We can find a minimal element by inspection: if $\mathscr{C}$ is not a minimal element in $\text{CAD}(S)$, we can find $\mathscr{C}'$ in $\text{CAD}(S)$ such that $\mathscr{C}'\prec \mathscr{C}$, and iterate the procedure with $\mathscr{C}'$.  This will end in a finite number of steps because $\text{SSP}(\mathscr{C})$ is finite and strictly contains $\text{SSP}(\mathscr{C}')$. Furthermore, this procedure will obviously result in a minimal CAD with the required properties.
\end{proof}

\begin{remark}\label{rem:naive}
   The determination of a minimal element in $\text{CAD}(S)$ presented in the proof of Proposition \ref{prop:existMin}  is in general not efficient. If $\mathscr{C}$ contains $K$ cells, then the set SSP$(\mathscr{C})$ of partitions that are strictly smaller than $\mathscr{C}$ contains $B_K - 1$ elements, where $B_K$ is the $K^\text{th}$ Bell number. This procedure will be improved by Algorithm \ref{algo:Min}.
\end{remark}

\begin{proposition}\label{prop:uniqueMin}
    There is a unique minimal CAD adapted to $S$ if and only if there exist a minimum CAD adapted to $S$.
\end{proposition}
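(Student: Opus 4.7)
The plan is to prove each direction separately, with the existence of a minimum immediately forcing uniqueness of minimal elements, and the reverse implication relying crucially on Proposition~\ref{prop:existMin} to guarantee that every CAD in $\text{CAD}(S)$ dominates some minimal element.

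For the implication ``minimum $\Rightarrow$ unique minimal,'' I would let $\mathscr{C}_{\min}$ denote a minimum CAD adapted to $S$ and let $\mathscr{C}$ be any minimal element of $(\text{CAD}(S),\preceq)$. By definition of minimum, $\mathscr{C}_{\min} \preceq \mathscr{C}$. The minimality of $\mathscr{C}$ then forces $\mathscr{C} = \mathscr{C}_{\min}$. This yields both that $\mathscr{C}_{\min}$ is itself minimal and that it is the only minimal element.

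For the converse, assume $\mathscr{C}_{\min}$ is the unique minimal element of $(\text{CAD}(S),\preceq)$. I need to show it is in fact the minimum, i.e.\ that $\mathscr{C}_{\min} \preceq \mathscr{C}'$ for every $\mathscr{C}' \in \text{CAD}(S)$. The key step is to pick an arbitrary $\mathscr{C}' \in \text{CAD}(S)$ and apply Proposition~\ref{prop:existMin} to produce a minimal CAD $\mathscr{C}'' \in \text{CAD}(S)$ with $\mathscr{C}'' \preceq \mathscr{C}'$. Uniqueness then gives $\mathscr{C}'' = \mathscr{C}_{\min}$, so $\mathscr{C}_{\min} \preceq \mathscr{C}'$, as required.

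There is no real obstacle here: both directions are immediate once Proposition~\ref{prop:existMin} is in hand (this is where the finiteness of $\text{CAD}(S)$ below a fixed CAD does its work). The only thing to be careful about is that ``unique minimal'' a priori only means at most one minimal element, so one should briefly note that the hypothesis in the converse direction already entails existence of at least one minimal element (again via Proposition~\ref{prop:existMin} applied to any $\mathscr{C} \in \text{CAD}(S)$, which is non-empty by Remark~\ref{rem:Collins}), making the use of ``the'' unique minimal CAD $\mathscr{C}_{\min}$ legitimate.
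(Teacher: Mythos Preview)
Your proposal is correct and follows essentially the same approach as the paper: the paper proves the ``unique minimal $\Rightarrow$ minimum'' direction exactly as you do (apply Proposition~\ref{prop:existMin} to an arbitrary $\mathscr{C}\in\text{CAD}(S)$ and invoke uniqueness), and simply declares the other implication ``direct.'' Your write-up is slightly more careful in spelling out both directions and in noting why $\text{CAD}(S)$ is non-empty, but the underlying argument is identical.
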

\begin{proof}
    Let $\mathscr{M}$ be the unique minimal CAD adapted to $S$. If $\mathscr{C} \in \text{CAD}(S)$, Proposition \ref{prop:existMin} gives us a minimal CAD adapted to $S$ and smaller than or equal to $\mathscr{C}$. By assumption, this minimal CAD is $\mathscr{M}$. This shows that $\mathscr{M}$ is smaller than or equal to any CAD adapted to $S$, hence is a minimum. The other implication is direct.
\end{proof}

In view of Proposition \ref{prop:uniqueMin} it is natural to further analyse properties of minimal CADs adapted to a set $S$ in order to decide if they must coincide. The following lemma provides a partial result in this direction concerning the last level of such minimal CADs. In order to have the same treatment for the general case and for the special case of dimension one, it is useful to consider the algebraic real numbers $\xi_{2j}$ that define a CAD of $\mathbb{R}$ as images of semi-algebraic functions defined on the one point set $\mathbb{R}^0\cong\{0\}$, which we still denote $\xi_{2j}$, that is, we set $\xi_{2j}=\xi_{2j}(0)$. It will also be convenient to consider the cells of a CAD $\mathscr{C}_1$ of $\mathbb{R}$ to be built above the set $\mathbb{R}^0$ endowed with the CAD $\mathscr{C}_0=\{C_{\varepsilon}\}$ with $C_\varepsilon=\{0\}$.
For every $x\in\mathbb{R}^{n-1}$, we define $S_x= \{y \in \mathbb{R} \; |\; (x,y) \in S\}\subset \mathbb{R}$ and denote by $\partial S_x$ its boundary in $\mathbb{R}$.

\begin{proposition}\label{rem:top-caract}
    Consider $\mathscr{C} \in \text{CAD}(S)$, $x\in\mathbb{R}^{n-1}$ and $C_I \in \mathscr{C}_{n-1}$ such that $x \in C_I$. Then $\partial S_x\subset \{\xi_{I:2j}(x) \in \mathbb{R} \; | \; j \in \{1,\ldots,u_I\}\}$. If $\mathscr{C}$ is minimal, these two sets are equal.
\end{proposition}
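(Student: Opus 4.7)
The plan is to prove the two assertions separately, exploiting the fact that each cell of $\mathscr{C}$ is either entirely contained in $S$ or entirely disjoint from $S$. For the first inclusion, I would fix $x\in C_I$ and note that, since $\mathscr{C}$ is adapted to $S$, the slice $S_x$ is the disjoint union, over the cells $C_{I:k}$ contained in $S$, of their slices at $x$. These slices are the points $\xi_{I:2j}(x)$ (coming from sections) and the open intervals $(\xi_{I:2j}(x),\xi_{I:2(j+1)}(x))$ (coming from sectors), and together they partition $\mathbb{R}$. Hence any $y\in\mathbb{R}\setminus\{\xi_{I:2j}(x):j\in\{1,\ldots,u_I\}\}$ lies in one of these open intervals, which is entirely inside $S_x$ or entirely inside $S_x^c$; either way $y$ is interior to $S_x$ or to $S_x^c$, so $y\notin\partial S_x$.

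For the reverse inclusion under the minimality hypothesis, I would proceed by contrapositive and construct a strictly coarser CAD adapted to $S$ whenever some value $\xi_{I:2j_0}(x)$ fails to belong to $\partial S_x$. Such a failure means that $\xi_{I:2j_0}(x)$ lies in the interior of $S_x$ or of $S_x^c$, so some open interval $(\xi_{I:2j_0}(x)-\varepsilon,\xi_{I:2j_0}(x)+\varepsilon)$ is contained in $S_x$ or in $S_x^c$. This open interval meets the slices at $x$ of the three consecutive cells $C_{I:2j_0-1}$, $C_{I:2j_0}$, and $C_{I:2j_0+1}$; because each of these cells is wholly in $S$ or wholly in $S^c$, they must all lie on the same side of $S$. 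The passage from local information at the single point $x$ to this global statement is the delicate step, and it rests entirely on the all-or-nothing membership of cells.

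I would then obtain the contradiction by removing the separating function $\xi_{I:2j_0}$ above $C_I$, while leaving every other datum of $\mathscr{C}$ unchanged. The remaining $u_I-1$ functions above $C_I$ still form a strictly ordered family of semi-algebraic continuous functions, so the result is a genuine CAD $\mathscr{C}'$ in the sense of Definition~\ref{def:cad}. The three cells above collapse into the single new sector $\{(a,b)\in C_I\times\mathbb{R}:\xi_{I:2(j_0-1)}(a)<b<\xi_{I:2(j_0+1)}(a)\}$, which by the previous paragraph lies entirely in $S$ or entirely in $S^c$. All other cells, and their membership in $S$, are unchanged, so $\mathscr{C}'$ is adapted to $S$. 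Since $\mathscr{C}'\prec\mathscr{C}$, this contradicts the minimality of $\mathscr{C}$ and finishes the proof.
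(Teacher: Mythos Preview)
Your proposal is correct and follows essentially the same approach as the paper: for the first inclusion you argue that any point outside the section values lies in an open interval contained in $S_x$ or in $S_x^c$, and for the reverse inclusion you use a value $\xi_{I:2j_0}(x)\notin\partial S_x$ to conclude that the three consecutive cells $C_{I:2j_0-1},C_{I:2j_0},C_{I:2j_0+1}$ lie on the same side of $S$, then merge them into a single sector to obtain a strictly coarser CAD adapted to $S$. The paper's proof carries out exactly this construction, writing the coarser CAD as $\widetilde{\mathscr{C}}_n=\mathscr{C}_n\setminus\{C_{I:2j-1},C_{I:2j},C_{I:2j+1}\}\cup\{C_{I:2j-1}\cup C_{I:2j}\cup C_{I:2j+1}\}$.
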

\begin{proof}
    Consider $y\in\partial S_x$ and suppose that $y$ is not in $\{\xi_{I:2j}(x) \in \mathbb{R} \; | \; j \in \{1,\ldots,u_I\}\}$. Since the point $(x,y)$ is not in a section of $\mathscr{C}$, it must be in a sector $C_{I:2j+1}$ of $\mathscr{C}$ for a $j \in \{0,\ldots,u_I\}$. Hence, the real number $y$ belongs to the open interval $U = (\xi_{I:2j}(x), \xi_{I:2(j+1)}(x))$. Since $y$ is in $\partial S_x$, there exists $z \in U \cap S_x$ and $z' \in U \cap (S_x)^c$. This means that both points $(x,z)$ of $S$ and $(x,z')$ of $S^c$ belong to $C_{I:2(j+1)}$. This is a contradiction because $\mathscr{C}$ is adapted to $S$.

    We now consider a minimal $\mathscr{C} \in \text{CAD}(S)$ and prove the other inclusion by contradiction. 
    Suppose that there exists $j \in \{1,\ldots,u_I\}$ such that $\xi_{I:2j}(x)$ is not in $\partial S_x$. Then $\xi_{I:2j}(x)$ lies in the interior of $S_x$ or in the interior of $(S_x)^c$, and there exists $\varepsilon > 0$ such that the open interval $(\xi_{I:2j}(x) - \varepsilon, \xi_{I:2j}(x) + \varepsilon)$ is a subset of $S_x$ or  $(S_x)^c$. This implies that there exists $\eta > 0$ such that the points  $P_\alpha = (x,\xi_{I:2j}(x)+ \alpha \eta), \alpha \in \{-1,0,1\}$ belong simultaneously to $C_{I:2j + \alpha} \cap S$ or $C_{I:2j + \alpha} \cap S^c$.  Since $\mathscr{C}$ is adapted to $S$, the three cells  $C_{I:2j-1}, C_{I:2j},$ $C_{I:2j+1}$ are simultaneously subsets of $S$ or of $S^c$. Thus, the tuple 
    $\widetilde{\mathscr{C}} =(\mathscr{C}_1,\ldots, \mathscr{C}_{n-1}, \widetilde{\mathscr{C}}_n)$
    with 
    \begin{align*}
        \widetilde{\mathscr{C}}_n = \mathscr{C} _n \setminus \{C_{I:2j-1},C_{I:2j},C_{I:2j+1}\} \cup \{C_{I:2j-1} \cup C_{I:2j} \cup C_{I:2j+1}\}
    \end{align*} is a CAD adapted to $S$ and strictly smaller than $\mathscr{C}$. This is a contradiction with the minimality of $\mathscr{C}$.
\end{proof}

The previous result asserts that the union of sections of the last level of minimal CADs adapted to $S$ only depends on $S$. We obtain immediately the following result, that will be used in the next section.
\begin{corollary}\label{prop:outil}
    Let $\mathscr{C}$ and  $\mathscr{C}'$ be two minimal CADs adapted to $S \subset \mathbb{R}^n$ and $C_{I} \in \mathscr{C}_{n-1}, C'_{I'} \in \mathscr{C}'_{n-1}$. If the intersection $C_I \cap C'_{I'}$ is not empty then $u_I = u'_{I'}$ and for all $j \in \{1,\ldots,u_I\}$, the restrictions of the functions $\xi_{I:2j}$ and  $\xi'_{I':2j}$ to $C_I \cap C'_{I'}$ are equal.
\end{corollary}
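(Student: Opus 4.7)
The plan is to reduce the statement to a pointwise comparison and then invoke Proposition \ref{rem:top-caract} twice, once for each minimal CAD. First I would pick an arbitrary $x \in C_I \cap C'_{I'}$, which is possible by hypothesis. Applied to $\mathscr{C}$ and to $\mathscr{C}'$, Proposition \ref{rem:top-caract} yields
\[
  \{\xi_{I:2j}(x) : j \in \{1,\ldots,u_I\}\} \;=\; \partial S_x \;=\; \{\xi'_{I':2j}(x) : j \in \{1,\ldots,u'_{I'}\}\},
\]
because both CADs are minimal and both $C_I, C'_{I'}$ contain the same point $x$.

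Next I would exploit the strict ordering built into Definition \ref{def:cad}. By construction, $\xi_{I:2} < \xi_{I:4} < \cdots < \xi_{I:2u_I}$ on $C_I$, so evaluating at $x$ produces exactly $u_I$ distinct real numbers, and likewise the primed family produces $u'_{I'}$ distinct values. Since these two finite sets of real numbers coincide (they both equal $\partial S_x$), they must have the same cardinality, giving $u_I = u'_{I'}$. Moreover, two strictly increasing enumerations of the same finite subset of $\mathbb{R}$ must agree term by term, so $\xi_{I:2j}(x) = \xi'_{I':2j}(x)$ for every $j \in \{1,\ldots,u_I\}$.

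Finally, since $x$ was chosen arbitrarily in $C_I \cap C'_{I'}$, the pointwise equality just obtained is precisely the equality of the restrictions of $\xi_{I:2j}$ and $\xi'_{I':2j}$ to that intersection, which is the claim. I do not foresee a genuine obstacle here: the argument is essentially a packaging of Proposition \ref{rem:top-caract}. The only minor point to watch is the case $n=1$, where $\mathscr{C}_{n-1} = \mathscr{C}_0 = \{C_\varepsilon\}$ with $C_\varepsilon = \{0\}$ by the convention introduced just before Proposition \ref{rem:top-caract}; but the same argument goes through unchanged at the single point $x = 0$.
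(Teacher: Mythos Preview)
Your proposal is correct and matches the paper's approach: the paper states that the corollary is obtained \emph{immediately} from Proposition~\ref{rem:top-caract}, and your argument is precisely the natural unpacking of that word---apply the proposition at each $x\in C_I\cap C'_{I'}$ to identify both families of section values with $\partial S_x$, then use the strict ordering of the $\xi$'s to force equal cardinality and term-by-term agreement.
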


\section{Minimum CAD{\small s}}\label{sec:minimum}
In this section we investigate the existence of a minimum element in $\text{CAD}(S)$. It happens that CADs have properties in low dimensions that are false in higher dimensions (see for instance \cite{lazard2010} or more recently \cite{locatelli}). In dimensions 1 and 2, we obtain an existence result for every $S$. However, in dimension $n \geq 3$, we exhibit examples of semi-algebraic sets admitting several distinct minimal adapted CADs and therefore no minimum adapted CAD. We first describe explicitly an elementary example in $\mathbb{R}^3$ defined only by means of linear constraints. 
We then use it to produce subsets of $\mathbb{R}^n$ ($n \geq 4$) with no minimum adapted CAD.

\begin{theorem}\label{thrm:existenceMinimum}
For every semi-algebraic set $S$ of $\;\mathbb{R}$ or of $\;\mathbb{R}^2$, the poset $\text{CAD}(S)$ has a minimum.
\end{theorem}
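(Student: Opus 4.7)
By Proposition \ref{prop:uniqueMin}, it suffices to prove that any two minimal CADs $\mathscr{C}, \mathscr{C}' \in \text{CAD}(S)$ are equal; the key tool throughout is Corollary \ref{prop:outil}, applied level by level.

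For $n=1$, I would use the corollary directly with the convention $\mathscr{C}_0 = \mathscr{C}'_0 = \{\{0\}\}$: since $C_\varepsilon \cap C'_\varepsilon = \{0\}$ is non-empty, it gives $u_\varepsilon = u'_\varepsilon$ and $\xi_{2j}(0) = \xi'_{2j}(0)$ for every $j$. A CAD of $\mathbb{R}$ is determined by its ordered list of section points, so this forces $\mathscr{C}_1 = \mathscr{C}'_1$, hence $\mathscr{C} = \mathscr{C}'$.

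For $n=2$, once $\mathscr{C}_1 = \mathscr{C}'_1$ is known, a cell-by-cell application of Corollary \ref{prop:outil} yields $\mathscr{C}_2 = \mathscr{C}'_2$ immediately, so the real work is to prove the equality of the first levels. I would argue by contradiction: if some section $\xi$ of $\mathscr{C}_1$ is not a section of $\mathscr{C}'_1$, let $(a,b)$ be the sector of $\mathscr{C}'_1$ containing $\xi$. The plan is to merge all cells of $\mathscr{C}_1$ whose projection lies in $[a,b]$ into the single sector $(a,b)$, with the corresponding merge on $\mathscr{C}_2$; for this to be well-defined the endpoints $a,b$ must themselves already be sections of $\mathscr{C}_1$ (or infinite). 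Granting this, Corollary \ref{prop:outil} forces every cell of $\mathscr{C}_1$ inside $(a,b)$ to carry exactly $u'_{I'}$ functions agreeing with $\xi'_{I':2j}$ on the overlap, and continuity of $\xi'_{I':2j}$ on $(a,b)$ glues them into continuous semi-algebraic functions on $(a,b)$. Each cell of $\mathscr{C}_2$ above a cell being merged is then contained in the corresponding cell $C'_{I':k}$ of $\mathscr{C}'_2$, which lies entirely in $S$ or in $S^c$ by adaptedness of $\mathscr{C}'$; so the merged $\widetilde{\mathscr{C}} \in \text{CAD}(S)$ is strictly smaller than $\mathscr{C}$, contradicting its minimality.

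The main obstacle is establishing the assumption that both endpoints of $(a,b)$ lie in $\mathscr{C}_1$. If, say, $a$ is instead interior to some sector $(\alpha,\beta)$ of $\mathscr{C}_1$, then the $\xi$-functions of $\mathscr{C}$ on $(\alpha,\beta)$ are continuous across $a$ and, by Corollary \ref{prop:outil}, coincide with the appropriate $\mathscr{C}'$-functions on their overlaps with $(a',a)$, $\{a\}$ and $(a,b)$, where $(a',a)$ is the cell of $\mathscr{C}'_1$ immediately left of $\{a\}$. Thus the three $\mathscr{C}'$-functions glue continuously across $a$ into functions on $(a',b)$, and the three corresponding cells of $\mathscr{C}'_2$ each share the sign of the cell of $\mathscr{C}_2$ above $(\alpha,\beta)$ through non-empty intersection. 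Merging these three cells produces a CAD adapted to $S$ strictly smaller than $\mathscr{C}'$, contradicting the minimality of $\mathscr{C}'$. This two-way coupling between the minimal CADs $\mathscr{C}$ and $\mathscr{C}'$ is precisely what ceases to work for $n \geq 3$, foreshadowing the counterexamples of the next section.
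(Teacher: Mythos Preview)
Your proof is correct and rests on the same core idea as the paper's---reduce via Proposition~\ref{prop:uniqueMin} to uniqueness of minimal CADs, then use Corollary~\ref{prop:outil} to glue section functions and merge cells, contradicting minimality---but the paper's execution for $n=2$ is more direct. Instead of merging every cell of $\mathscr{C}_1$ contained in the sector $(a,b)$, the paper merges only the three adjacent cells $C_{2i-1}, C_{2i}, C_{2i+1}$ around the offending section $\xi_{2i}$. Since $\xi_{2i}$ lies in the closure of all three, each of them meets the same sector $C'_{2j-1}=(a,b)$ of $\mathscr{C}'_1$; Corollary~\ref{prop:outil} then forces the three piecewise functions to coincide with $\xi'_{2j-1:2k}$ on a neighbourhood of $\xi_{2i}$, which is exactly what is needed for continuity of the glued function at the one junction point. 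No hypothesis on the endpoints $a,b$ is required, so your ``main obstacle'' simply does not arise.

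It is worth noting that your obstacle-resolution paragraph is, with the roles of $\mathscr{C}$ and $\mathscr{C}'$ swapped, precisely the paper's three-cell merge. So you have in fact rediscovered the shorter argument as a sub-lemma; promoting it to the main step would let you drop the larger merge and the endpoint discussion entirely.
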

\begin{proof}
In view of Proposition \ref{prop:uniqueMin}, it is sufficient to show that if $\mathscr{C}$ and $\mathscr{C}'$ are two minimal elements in $\text{CAD}(S)$, then $\mathscr{C} = \mathscr{C}'$. We use the notation of Definition \ref{def:cad} for $\mathscr{C}$ and $\mathscr{C}'$ with obvious adjustments. 

Suppose first that the ambient space is $\mathbb{R}$. The projection of any cell of $\mathscr{C}$ or $\mathscr{C}'$ coincides  with $C_{\varepsilon}$, as explained before Proposition \ref{rem:top-caract}. This result directly implies that $\mathscr{C} = \mathscr{C}'$. 

Suppose now that the ambient space is $\mathbb{R}^2$. We first show that $\mathscr{C}_1 = \mathscr{C}'_1$.  Assume for the sake of contradiction that there exists $i \in \{1,\ldots,u_\varepsilon\}$ such that
     $\xi_{2i} \notin \{\xi'_{2j} \; | \; j \in \{1,\ldots,u_\varepsilon'\}\}.$ We will prove that we can merge the corresponding cells of the cylinders over $C_{2i-1}, C_{2i}$ and $C_{2i+1}$ to build a CAD $\widetilde{\mathscr{C}}\in \text{CAD}(S)$ that is strictly smaller than $\mathscr{C}$, which is a contradiction.  We first show that the semi-algebraic continuous functions that define the sections of the cylinders over $C_{2i-1}, C_{2i}$ and $C_{2i+1}$ can be glued together to define cells over the union $C_{2i-1}\cup C_{2i}\cup C_{2i+1}$. Since we have $\xi'_{2}<\cdots<\xi_{2u_\varepsilon'}$,  there exists $j \in \{1,\ldots,u_\varepsilon'+1\}$ such that
     $\xi'_{2(j-1)} < \xi_{2i} < \xi'_{2j}.$ By definition, $\xi_{2i}$ is in the closure of $C_{2i-1}$, $C_{2i}$ and $C_{2i+1}$, so the intersections $C_{2i-1} \cap C'_{2j-1},
         C_{2i} \cap C'_{2j-1}$ and $C_{2i+1} \cap C'_{2j-1}$ are not empty. Then we can use Corollary \ref{prop:outil} to obtain not only
     \[u'_{2j-1} = u_{2i-1}= u_{2i}= u_{2i+1},\]
     but also
     \begin{align*}
         \xi'_{2j-1:2k}(x) = \begin{cases}
             \xi_{2i-1 : 2k}(x) &\text{ if } x \in C_{2i-1} \cap C'_{2j-1},\\
             \xi_{2i : 2k}(x) &\text{ if } x \in C_{2i} \cap C'_{2j-1},\\
             \xi_{2i+1 : 2k}(x) &\text{ if } x \in C_{2i+1} \cap C'_{2j-1}.
         \end{cases}
     \end{align*}
     for every $k \in \{1,\ldots,u'_{2j-1}\}$. This implies that for any such $k$ the function $\widetilde{\xi}_{2i-1:2k}$ defined on $C_{2i-1} \cup C_{2i} \cup C_{2i+1}$ by
     \[ \widetilde{\xi}_{2i-1:2k}(x) = \begin{cases}
         \xi_{2i-1 :2k}(x) &\text{ if } x \in C_{2i-1},\\
         \xi_{2i : 2k}(x) &\text{ if } x \in C_{2i},\\
         \xi_{2i+1 : 2k}(x) &\text{ if } x \in C_{2i+1},
     \end{cases}\]
     is continuous, since it coincides with a continuous function in the neighbourhood of every point in its domain. It is also semi-algebraic since its graph is the union of the graphs of $\xi_{2i-1:2k}, \xi_{2i:2k}$ and $\xi_{2i+1:2k}$, which are all semi-algebraic by definition.
     Finally, for all $l \in \{1,\ldots,2u'_{2j-1}+1\}$, the three cells $C_{2i-1:l},$ $C_{2i:l}$ and $C_{2i+1:l}$ have respectively a non-empty intersection with the cell $C'_{2j-1:l}$. Using the fact that $\mathscr{C}$ and $\mathscr{C}'$ are adapted to $S$, these four cells are simultaneously either a subset of $S$ or of $S^c$.
     Therefore, the couple $\widetilde{\mathscr{C}} = (\widetilde{\mathscr{C}}_1,\widetilde{\mathscr{C}}_2)$
defined by
    \begin{align*}
        \widetilde{\mathscr{C}}_1 = \mathscr{C}_1 \setminus& \{C_{2i-1},C_{2i},C_{2i+1}\} \cup \{C_{2i-1} \cup C_{2i} \cup C_{2i+1}\},\\
        \widetilde{\mathscr{C}}_2 = \Big(\mathscr{C} _2 \setminus& \big\{C_{2i-1 : l},C_{2i:l},C_{2i+1:l} \; | \; l \in \{1,\ldots,2u'_{2j-1}+1\}\big\}\Big) \\
        &\cup \Big\{C_{2i-1:l} \cup C_{2i:l} \cup C_{2i+1:l}\; | \; l \in \{1,\ldots,2u'_{2j-1}\}\Big\}.
    \end{align*}
    is a CAD that is adapted to $S$ and strictly smaller than $\mathscr{C}$ as requested.
    This shows that we have the inclusion
    \[\big\{\xi_{2i} \; | \; i \in \{1,\ldots,u_{\varepsilon}\} \big\} \subset \big\{\xi'_{2j} \; | \; j \in \{1,\ldots,u_\varepsilon'\}\big\}.\]
    Since the other inclusion follows by symmetry, these finite sets are equal, implying that $u_{\varepsilon} = u_\varepsilon'$ and that $\xi_{2k} = \xi'_{2k}$ for all $k \in \{1,\ldots,u_{\varepsilon}\}.$ In particular, we have $\mathscr{C}_1 = \mathscr{C}_1'$. Using again Corollary \ref{prop:outil} above each cell of $\mathscr{C}_1 = \mathscr{C}_1'$, we obtain that $\mathscr{C}_2 = \mathscr{C}_2'$. This means that $\mathscr{C} = \mathscr{C}'$, as announced.
\end{proof}

\begin{remark}
    Proposition \ref{rem:top-caract} provides a topological characterization of the minimum of CAD($S$) when $S \subset \mathbb{R}$. Its set of sections is exactly the boundary of $S$.
\end{remark}

We now present an example of a semi-algebraic set of $\mathbb{R}^3$ admitting several distinct adapted minimal CADs, hence no minimum adapted CAD.

\begin{definition}
    We define the Trousers $\mathbb{T}$ as the semi-algebraic set given by
     \begin{align*}
        \mathbb{T} = &\big\{(x,y,z) \in \mathbb{R}^3 \;|\; (x \leq 0 \lor y \leq 0)\land z = 0\big\}\\
              &\cup \big\{(x,y,z) \in \mathbb{R}^3 \;|\; x > 0 \land y > 0 \land z = -x/2\big\}.
     \end{align*}
\end{definition}

Notice that $\mathbb{T}$ is built only with linear polynomials.
Observe that in Figure~\ref{fig:trousers} the dashed half-line is not included in $\mathbb{T}$, while the thick half-line is included in $\mathbb{T}$.

\begin{figure}[H]
    \center
    \includegraphics[scale=0.35]{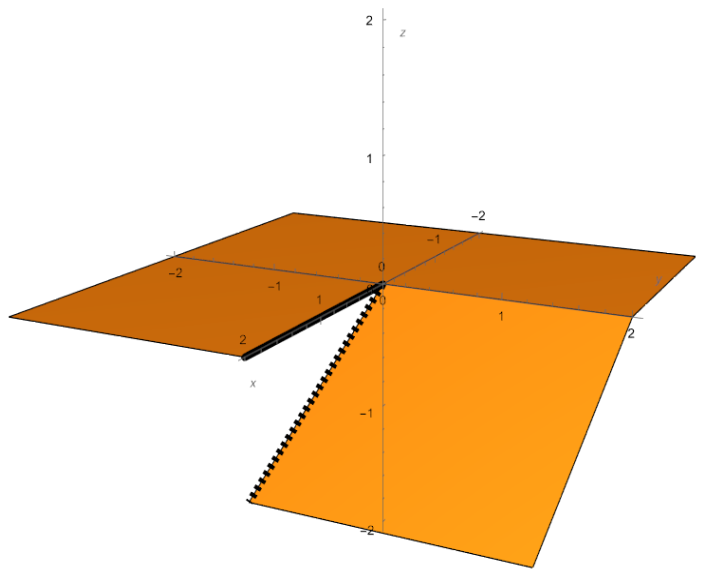}
    \caption{The Trousers $\mathbb{T}$}
    \label{fig:trousers}
\end{figure}

\begin{proposition}\label{ex:trousers1}
    There is no minimum CAD adapted to the trousers $\mathbb{T}$.
\end{proposition}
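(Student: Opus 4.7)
The plan is to invoke Proposition~\ref{prop:uniqueMin}: it suffices to exhibit two distinct minimal elements of $\text{CAD}(\mathbb{T})$, since then uniqueness of a minimal CAD fails and no minimum can exist.

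The key observation guiding the construction is that the fiber function of $\mathbb{T}$, namely $(x,y) \mapsto z$ with $(x,y,z) \in \mathbb{T}$, is continuous everywhere except along the half-line $\{(x,0) : x > 0\}$; hence any adapted CAD must separate this half-line from the rest of $\mathbb{R}^2$ at level two. I propose two CADs $\mathscr{C}, \mathscr{C}'$ doing so in different ways. For $\mathscr{C}$, the first level is $\mathscr{C}_1 = \{(-\infty,0), \{0\}, (0,+\infty)\}$; at level two, the cylinders above $(-\infty,0)$ and $\{0\}$ carry no section (so $u_I = 0$), while above $(0,+\infty)$ a single section $y = 0$ yields three cells; each of the five resulting level-two cells supports a level-three section equal to the (continuous) fiber of $\mathbb{T}$, namely $z = 0$ in four cases and $z = -x/2$ above $(0,+\infty) \times (0,+\infty)$. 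For $\mathscr{C}'$, the first level is $\mathscr{C}'_1 = \{\mathbb{R}\}$ and $\mathscr{C}'_2$ is obtained by the single section $y = 0$ on $\mathbb{R}$; the level-three section above $\mathbb{R} \times (0,+\infty)$ is $z = \min(0,-x/2)$, which is continuous and semi-algebraic, while the other two level-three sections are $z = 0$. Both CADs are adapted to $\mathbb{T}$ by direct verification, and they are distinct since $\mathscr{C}_1 \neq \mathscr{C}'_1$.

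The remaining work is to prove minimality of both $\mathscr{C}$ and $\mathscr{C}'$. For $\mathscr{C}$, I consider an adapted $\mathscr{B} \preceq \mathscr{C}$ and split into two cases according to whether $\mathscr{B}_1 = \mathscr{C}_1$ or $\mathscr{B}_1 = \{\mathbb{R}\}$ (the only coarsenings of $\mathscr{C}_1$). In the former case, Corollary~\ref{prop:outil} applied above each cell of $\mathscr{C}_1$, combined with the discontinuity of the fiber of $\mathbb{T}$ along $y = 0$ over $(0,+\infty)$, forces $\mathscr{B}_2 = \mathscr{C}_2$ and hence $\mathscr{B}_3 = \mathscr{C}_3$. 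In the latter case, the cell $\{0\} \times \mathbb{R}$ of $\mathscr{C}_2$ would need to lie inside a single cell of $\mathscr{B}_2$; but in any cylindrical decomposition of $\mathbb{R}^2$ over $\mathbb{R}$, sectors are bounded by graphs of finite continuous functions and sections are themselves such graphs, so no cell has an infinite vertical slice except $\mathbb{R}^2$ itself. This forces $\mathscr{B}_2 = \{\mathbb{R}^2\}$, which is incompatible with adaptedness because the fiber of $\mathbb{T}$ is discontinuous along the positive $x$-axis and cannot be represented by a continuous section on $\mathbb{R}^2$. The minimality of $\mathscr{C}'$ is analogous and simpler: $\mathscr{B}_1 = \{\mathbb{R}\}$ is automatic, and $\mathscr{B}_2$ either has a single cell (ruled out as above) or three cells, in which case the refinement condition and Corollary~\ref{prop:outil} yield $\mathscr{B} = \mathscr{C}'$.

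I expect the main obstacle to be the clean handling of the sub-case $\mathscr{B}_1 = \{\mathbb{R}\}$ for $\mathscr{C}$: it requires reading carefully the cylindrical structure to conclude that no nontrivial cell over $\mathbb{R}$ can contain a full vertical line, and then ruling out the trivial case via the discontinuity of the fiber of $\mathbb{T}$.
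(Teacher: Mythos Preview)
Your two CADs coincide with the paper's (with the labels $\mathscr{C}$ and $\mathscr{C}'$ swapped; your $\min(0,-x/2)$ is the paper's $-\tfrac{x}{2}\chi(x,y)$ on $y>0$), and your overall strategy via Proposition~\ref{prop:uniqueMin} is exactly the paper's. The paper simply asserts minimality by ``direct inspection'' (and later re-derives it cleanly via the reduction machinery of Section~\ref{sec:red}), whereas you spell out a case analysis, which is a legitimate alternative.

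One genuine slip: your appeal to Corollary~\ref{prop:outil} is misplaced. That corollary compares the \emph{last-level} section functions of two CADs that are \emph{both} already known to be minimal; here $\mathscr{B}$ is merely an adapted coarsening of $\mathscr{C}$, and you are trying to constrain level~$2$, not level~$3$. The argument you actually need (and essentially sketch) is direct: above $(-\infty,0)$ and $\{0\}$ the level-$2$ cylinders of $\mathscr{C}$ already consist of a single cell and cannot be coarsened; above $(0,+\infty)$, collapsing the three level-$2$ cells to one would force $\mathbb{T}$ to be a single continuous section over $(0,+\infty)\times\mathbb{R}$, contradicting the discontinuity of the fiber along $y=0$. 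Then at level~$3$, each cylinder of $\mathscr{C}$ has exactly one section, and removing it would make $\mathbb{T}$ meet a sector, contradicting adaptedness. So drop the reference to Corollary~\ref{prop:outil} and state this elementary argument; the rest of your proof, including the $\mathscr{B}_1=\{\mathbb{R}\}$ sub-case via the ``no full vertical line in a nontrivial cylindrical cell'' observation, is correct.
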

\begin{proof}
    We describe two distinct minimal CADs $\mathscr{C}$ and $\mathscr{C}'$ which are adapted to $\mathbb{T}$ and conclude by Proposition \ref{prop:uniqueMin}.
    For the first one, the partition $\mathscr{C}_1$ contains the unique cell $C_1 = \mathbb{R}$ while $\mathscr{C}_2$ consists in the three cells $C_{11} = C_1 \times (-\infty, 0), C_{12} = C_1 \times \{0\}, C_{13} = C_1 \times (0,+\infty).$ The partition $\mathscr{C}_3$ is built with the nine cells
    \begin{align*}
        C_{1j1} &= C_{1j} \times (-\infty,0),  &C_{131} \equiv (x,y) \in C_{13}\land z < -\frac{x}{2} \chi(x,y), \\
        C_{1j2} &= C_{1j} \times \{0\}, &C_{132} \equiv (x,y) \in C_{13}\land z = -\frac{x}{2} \chi(x,y),\\
        C_{1j3} &= C_{1j} \times (0,+\infty), &C_{133} \equiv (x,y) \in C_{13}\land z > -\frac{x}{2} \chi(x,y),
    \end{align*}
    for $j \in \{1,2\}$, where $\chi$ is the indicator function of the set $\{(x,y) \in \mathbb{R}^2 \; | \; x > 0 \land y > 0\}$. In other words, for each cell $C_{11}, C_{12}$ and $C_{13}$, there is a unique section defined by the function $x \mapsto -\frac{x}{2} \chi(x,y)$.
    For the second one, the partition $\mathscr{C}'_1$ contains the three cells $C'_1 = (-\infty,0)$, $C'_2 = \{0\}$, $C'_3 = (0,+\infty)$ while $\mathscr{C}'_2$ consists in the cells $C'_{11} = C'_1 \times \mathbb{R}, C'_{21} = C'_2 \times \mathbb{R}, C'_{31} = C'_3 \times (-\infty, 0), C'_{32} = C'_3 \times \{0\}, C'_{33} = C'_3 \times (0,+\infty).$ The partition $\mathscr{C}'_3$ is built with the fifteen cells
    \begin{align*}
        C'_{ik_i1} &= C'_{ik_i} \times (-\infty,0), &C'_{331} \equiv (x,y) \in C'_{33}\land z < -\frac{x}{2}, \\
                C'_{ik_i2} &= C'_{ik_i} \times \{0\},& C'_{332}  \equiv (x,y) \in C'_{33}\land z = -\frac{x}{2}\\
                C'_{ik_i3} &= C'_{ik_i} \times (0,+\infty), &C'_{333} \equiv (x,y) \in C'_{33}\land z > -\frac{x}{2},
        \end{align*}
    for $i \in \{1,2,3\}, k_1 = k_2 = 1, k_3 \in \{1,2\}.$ 
    \begin{figure}[H]
        \center
        \includegraphics[scale=0.15]{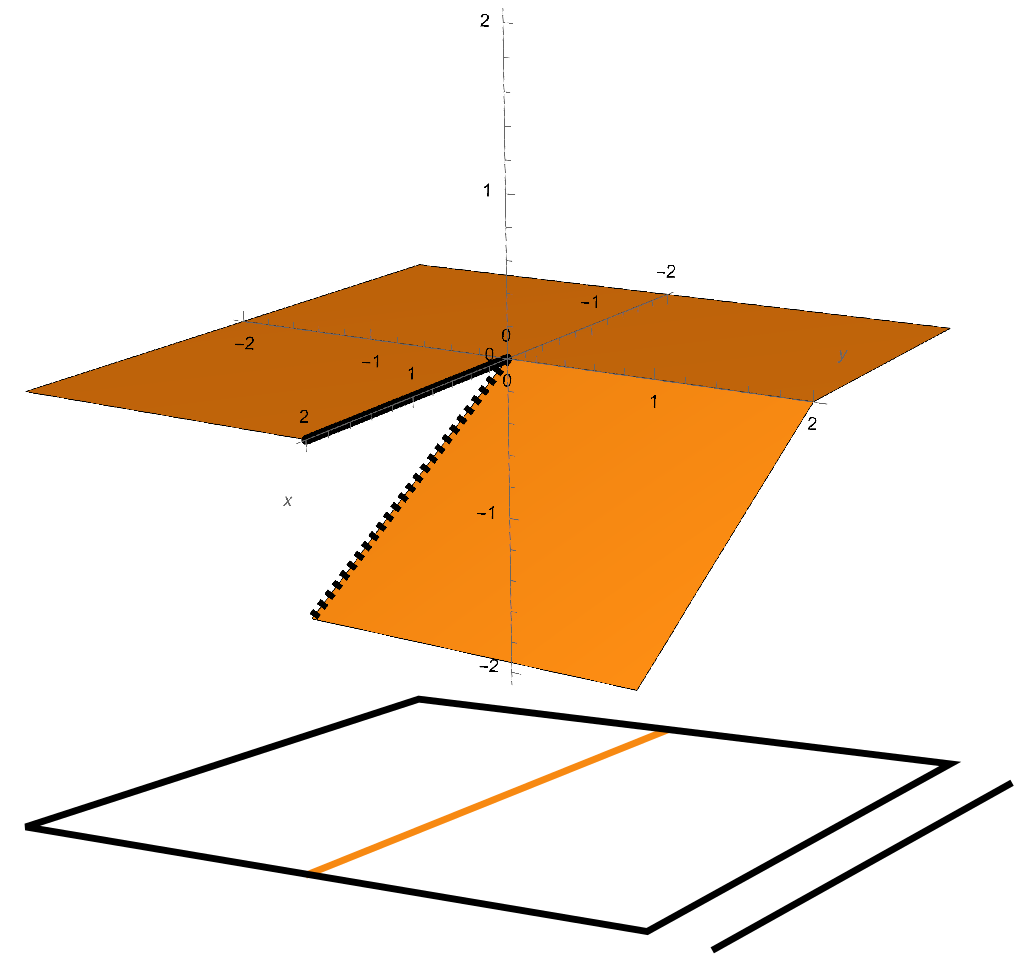}~
        \includegraphics[scale=0.15]{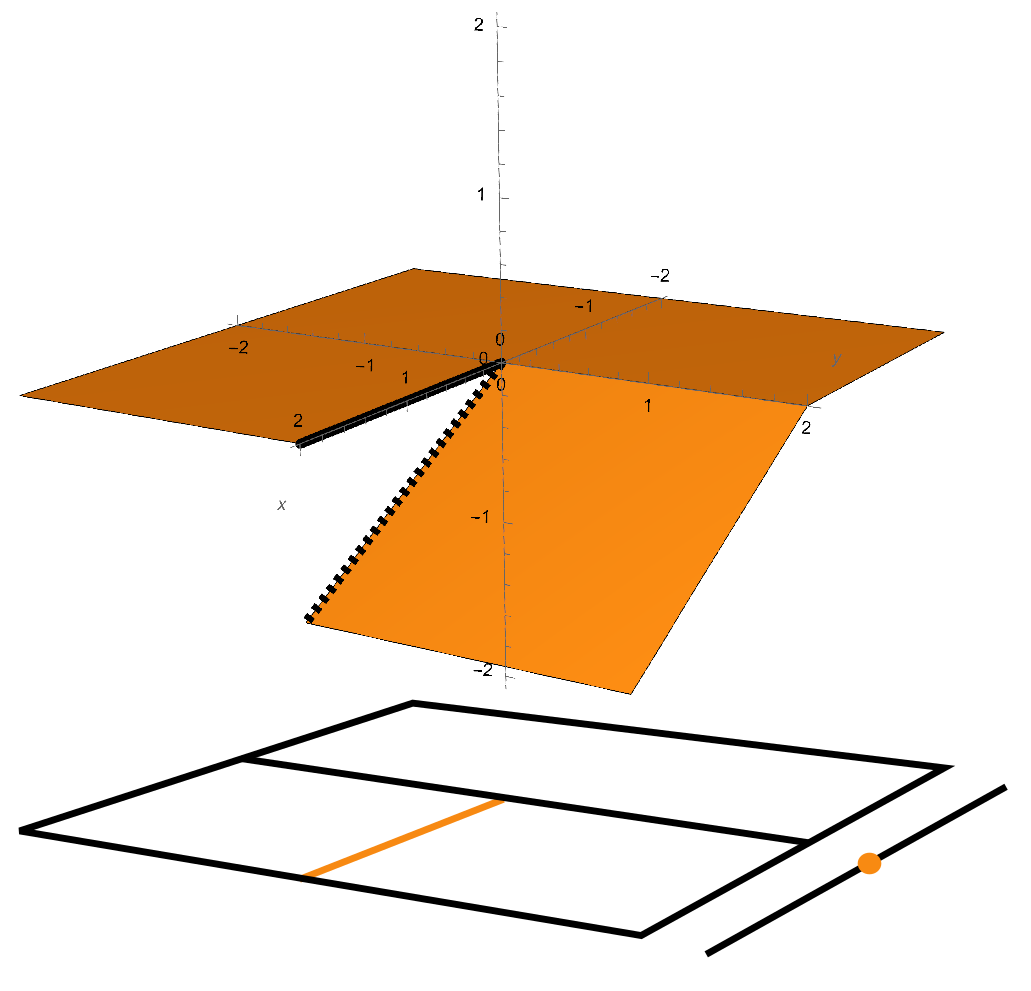}
        \caption{The CADs $\mathscr{C}$ and $\mathscr{C}'$}
        \label{fig:trousersCADs}
    \end{figure}
    These CADs $\mathscr{C}$ and $\mathscr{C}'$, which are obviously distinct, are adapted to $\mathbb{T}$ because
    \begin{align*}
        \mathbb{T} &= C_{112} \cup C_{122} \cup C_{132}\\
        &= C'_{112} \cup C'_{212} \cup C'_{312}\cup C'_{322} \cup C'_{332}.
    \end{align*}
    A careful but direct inspection then shows that these CADs are indeed minimal.
\end{proof}

\begin{remark}\label{rem:trousersHard}
    In order to show the minimality of $\mathscr{C}$ and $\mathscr{C}'$ in the previous proof by considering all the partitions that are strictly smaller than $\mathscr{C}$ and $\mathscr{C}'$, one is led to investigate the sets $\text{SSP}(\mathscr{C})$ and $\text{SSP}(\mathscr{C}')$ which contain respectively $B_9 - 1 =  21\,146$ and $B_{15} - 1 =  1\, 382\, 958\,
    544$ elements, as discussed in Remark \ref{rem:naive}. Taking into account the properties of CADs (e.g. the topology of cells and the cylindrical character of the partition) drastically reduces the number of partitions to be taken into consideration. Hence, the brute force approach is doomed to be inefficient.
    This observation leads us to study in the next section the notion of CAD reduction. This concept will enable us to analyse the order on CAD$(S)$, and for a given CAD $\mathscr{C}$, quickly remove many candidates from SSP($\mathscr{C}$) that will never give rise to a CAD. We will then easily justify the minimality of the CADs given in the proof of Proposition \ref{ex:trousers1} (see Example \ref{ex:trousers2}).
\end{remark}

We now extend Proposition \ref{ex:trousers1} in higher dimension.

\begin{corollary}
    The set $\mathbb{T}_n$ defined for $n \geq 3$ by
    \begin{align*}
        \mathbb{T}_n = &\big\{(x_1,\ldots,x_n) \in \mathbb{R}^n \; | \; (x_1 \leq 0 \lor x_2 \leq 0) \land (x_3 = 0)\big\}\\
            &\cup \big\{(x_1,\ldots,x_n) \in \mathbb{R}^3 \; | \; x_1 > 0 \land x_2 > 0 \land x_3 = -x_1/2\big \}
    \end{align*}
    admits no minimum adapted CAD.
\end{corollary}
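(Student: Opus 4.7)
The plan is to construct two distinct minimal CADs of $\mathbb{R}^n$ adapted to $\mathbb{T}_n$ by trivially extending the two distinct minimal CADs $\mathscr{C}$ and $\mathscr{C}'$ of $\mathbb{R}^3$ adapted to $\mathbb{T}=\mathbb{T}_3$ exhibited in Proposition \ref{ex:trousers1}, then conclude by Proposition \ref{prop:uniqueMin}. The crucial observation is that $\mathbb{T}_n$ is the cylinder $\mathbb{T}\times\mathbb{R}^{n-3}$: membership in $\mathbb{T}_n$ depends only on the first three coordinates.

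For the construction, given any CAD $\mathscr{C}$ of $\mathbb{R}^3$, I would define its \emph{trivial extension} $\mathscr{D}=(\mathscr{D}_1,\ldots,\mathscr{D}_n)$ by $\mathscr{D}_k=\mathscr{C}_k$ for $k\leq 3$, and for each cell $C_I$ at level $k\geq 3$ set $u_I=0$, so that the cylinder above $C_I$ at level $k+1$ consists of the single sector $C_I\times\mathbb{R}$. Inductively, the cells of $\mathscr{D}_k$ for $k\geq 3$ are exactly the sets $C\times\mathbb{R}^{k-3}$ with $C\in\mathscr{C}_3$, and since $\mathbb{T}_n=\mathbb{T}\times\mathbb{R}^{n-3}$, the extension $\mathscr{D}$ is adapted to $\mathbb{T}_n$ whenever $\mathscr{C}$ is adapted to $\mathbb{T}$. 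Applying this to $\mathscr{C}$ and $\mathscr{C}'$ yields two distinct elements $\mathscr{D}, \mathscr{D}'$ of $\text{CAD}(\mathbb{T}_n)$, since their truncations to level $3$ already differ.

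To prove minimality of $\mathscr{D}$, I rely on the following descent property: if $\mathscr{A}\preceq\mathscr{B}$ are two CADs of $\mathbb{R}^n$, then the truncations satisfy $(\mathscr{A}_1,\ldots,\mathscr{A}_k)\preceq(\mathscr{B}_1,\ldots,\mathscr{B}_k)$ for every $k\leq n$; this follows since every cell of $\mathscr{A}_k$ is the projection onto $\mathbb{R}^k$ of any cell in its cylinder in $\mathscr{A}_n$, and projection commutes with unions. Now suppose for contradiction that $\widetilde{\mathscr{D}}\prec\mathscr{D}$ in $\text{CAD}(\mathbb{T}_n)$. The truncation $(\widetilde{\mathscr{D}}_1,\widetilde{\mathscr{D}}_2,\widetilde{\mathscr{D}}_3)$ is a CAD of $\mathbb{R}^3$ adapted to $\mathbb{T}$: for any cell $C\in\widetilde{\mathscr{D}}_3$, the cells of $\widetilde{\mathscr{D}}_n$ projecting onto $C$ partition $C\times\mathbb{R}^{n-3}$ and each lies entirely in $\mathbb{T}_n$ or in its complement; since membership in $\mathbb{T}_n$ is determined by the base point in $\mathbb{R}^3$, this forces $C\subseteq\mathbb{T}$ or $C\subseteq\mathbb{T}^c$. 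By the descent property, this truncation is $\preceq\mathscr{C}$.

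It remains to exclude equality of the truncation with $\mathscr{C}$. If they coincide, then for each $C\in\mathscr{C}_3=\widetilde{\mathscr{D}}_3$ the cylinder of $\mathscr{D}_n$ above $C$ consists of the single cell $C\times\mathbb{R}^{n-3}$; since every cell of $\widetilde{\mathscr{D}}_n$ above $C$ is a union of cells of $\mathscr{D}_n$ above $C$, it must itself equal $C\times\mathbb{R}^{n-3}$. Descending from level $n$ to level $4$, the cylinder of $\widetilde{\mathscr{D}}$ above $C$ is forced to be trivial at every level, yielding $\widetilde{\mathscr{D}}=\mathscr{D}$ and contradicting strictness. Hence the truncation is strictly smaller than $\mathscr{C}$, contradicting the minimality of $\mathscr{C}$. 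The same argument applies to $\mathscr{D}'$, and I expect the main obstacle to be precisely this descent from strict refinement of $\mathscr{D}$ to strict refinement of $\mathscr{C}$, which depends on exploiting the trivial cylinder structure of $\mathscr{D}$ above level $3$.
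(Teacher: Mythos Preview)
Your proof is correct and uses the same construction as the paper: trivially extend the two minimal CADs $\mathscr{C},\mathscr{C}'\in\text{CAD}(\mathbb{T})$ by stacking single-sector cylinders above level~$3$, then argue that the extensions are distinct minimal elements of $\text{CAD}(\mathbb{T}_n)$. Where the paper establishes minimality of the extensions only by ``a careful inspection'' (deferred, per the subsequent remark, to the reduction machinery of Section~\ref{sec:red}), you supply a direct, self-contained descent argument: any strict refinement $\widetilde{\mathscr{D}}\prec\mathscr{D}$ in $\text{CAD}(\mathbb{T}_n)$ truncates to an element of $\text{CAD}(\mathbb{T})$ that is $\preceq\mathscr{C}$, and equality of the truncations forces $\widetilde{\mathscr{D}}=\mathscr{D}$ because the unique cell of $\mathscr{D}_n$ above each $C\in\mathscr{C}_3$ is $C\times\mathbb{R}^{n-3}$. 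Your route is more elementary and independent of the later sections, while the paper's approach has the virtue of exercising the general reduction framework it goes on to develop.
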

\begin{proof}
    By the very definition of $\mathbb{T}_n$, we have $(x_1,x_2, x_3) \in \mathbb{T}$ if and only if for all $x_4,\ldots,x_n \in \mathbb{R}$, $(x_1,\ldots,x_n) \in \mathbb{T}_n$. It follows that if $\mathcal{D} = (\mathcal{D}_1, \mathcal{D}_2, \mathcal{D}_3)$ is a CAD adapted to $\mathbb{T}$, then the tuple $\widetilde{\mathcal{D}} = ( \mathcal{D}_1, \mathcal{D}_2, \mathcal{D}_3, \mathcal{D}_4, \ldots, \mathcal{D}_n)$, where
    \[\mathcal{D}_k = \big\{D \times \mathbb{R} \; | \; D \in \mathcal{D}_{k-1}\big\}\]
    for $k \geq 4$, is a CAD adapted to $\mathbb{T}_n$. Considering again the CADs $\mathscr{C}$ and $\mathscr{C}'$ adapted to $\mathbb{T}$ described in the proof of Proposition \ref{ex:trousers1}, a careful inspection shows that the corresponding CADs $\widetilde{\mathscr{C}}$ and $\widetilde{\mathscr{C}}'$ are two distinct minimal CADs adapted to $\mathbb{T}_n$.  
\end{proof}

\begin{remark}
    Here again, the proof of the minimality of $\widetilde{\mathscr{C}}$ and $\widetilde{\mathscr{C}}'$ will be greatly simplified by using the tools developed in Section \ref{sec:red}.
\end{remark}

We now exhibit two more examples of closed sets, one bounded and one unbounded, that do not admit minimum adapted CADs. These constructions can be readily extended to higher dimensions.
\begin{example}
    Neither of the semi-algebraic sets
    \begin{align*}
        \mathcal{B} = &[-1,1] \times [-1,1] \times [0,1] \\
        &\cup \big\{(x,y,z) \in \mathbb{R}^3 \; | \; y = 0, 0 \leq x \leq 1, 1 \leq z \leq x+1\big\},\\
        \mathcal{U} = &\big\{(x,y,z) \in \mathbb{R}^3 | \; (x \leq 0 \lor y \leq 0)  \land z \leq 0\big\}\\
            & \cup \big\{(x,y,z) \in \mathbb{R}^3 |\; x \geq 0 \land y \geq 0 \land x+yz \leq 0 \land z \leq 0\big\}
    \end{align*}
    admits a minimum CAD. We can indeed describe two distinct minimal CADs adapted to $\mathcal{U}$ using the same  $\mathscr{C}$ and $\mathscr{C}'$ defined for the trousers in the proof of Proposition \ref{ex:trousers1}, but replacing all occurrences of $-x/2$ with $-x/y$. The construction of two minimal CADs adapted to $\mathcal{B}$ follows the same lines with suitable adaptations.  It is therefore omitted.
    \begin{figure}[H]
        \center
        \includegraphics[scale=0.27]{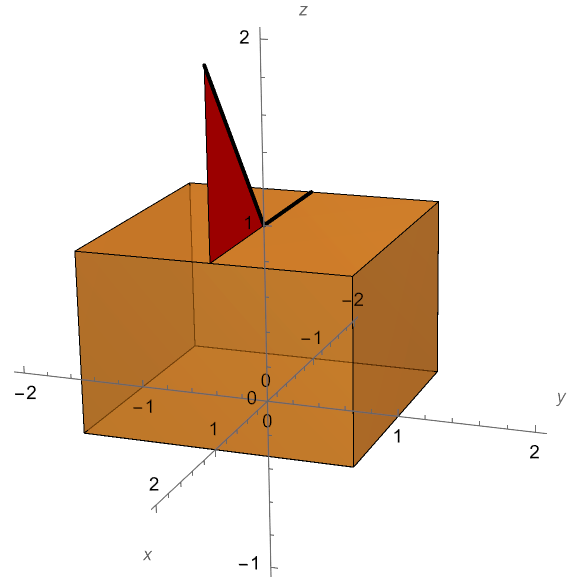}~
        \includegraphics[scale=0.24]{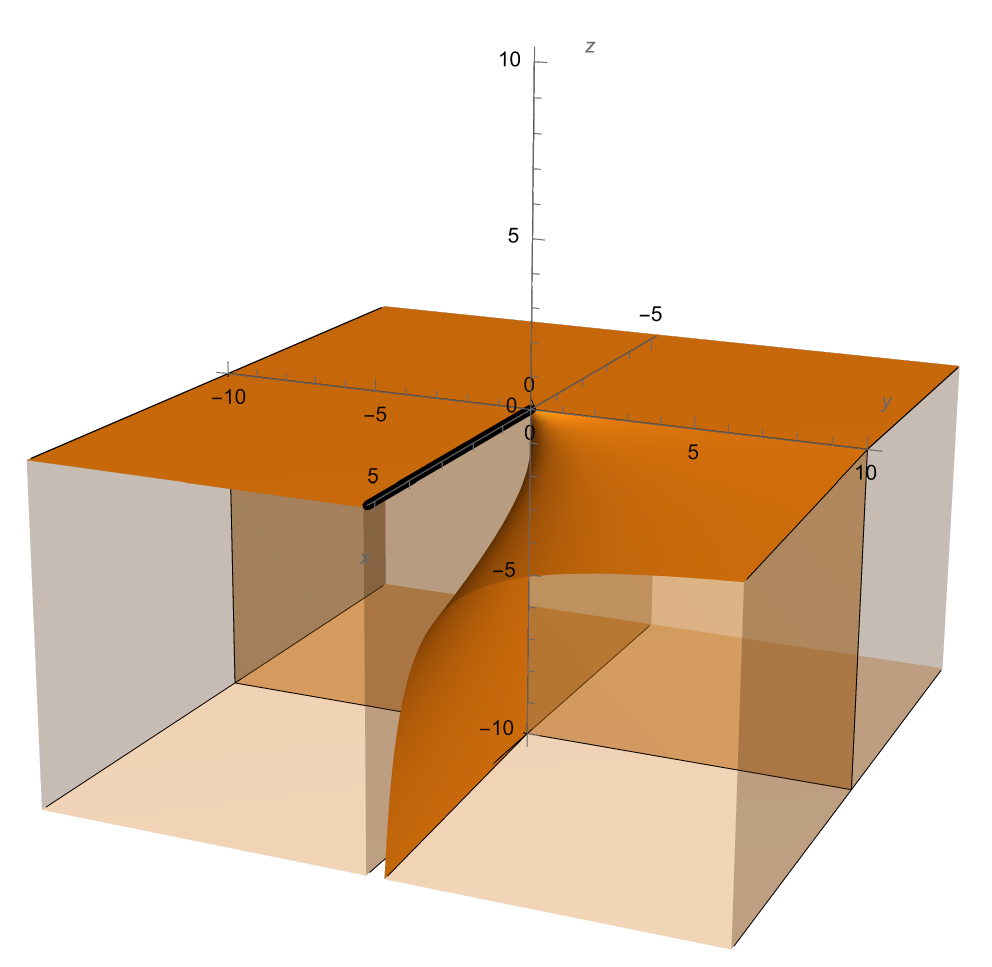}
        \caption{Semi-algebraic sets with two distinct minimal CADs}
    \end{figure}
\end{example}

\section{CAD reductions}\label{sec:red}

As already mentioned in Remark \ref{rem:trousersHard}, the determination of CADs that are adapted to a set $S$ and smaller than a given CAD $\mathscr{C}$ by direct inspection is certainly inefficient and already cumbersome even for low dimensional examples (see Section \ref{sec:minimum}).
The notion of CAD reduction that is developed in this section offers a better grasp on the partially ordered set ($\text{CAD}(S), \preceq$). It enables us to introduce an algorithm for simplifying the computation of minimal CADs. We apply this algorithm to the CADs introduced in the proof of Proposition \ref{ex:trousers1}, and provide a short proof of their minimality. 

\subsection{Tree structure of CAD}

The tree associated with a given CAD $\mathscr{C}$ can be seen as a simplified version of $\mathscr{C}$, which only encodes its combinatorial structure. Roughly speaking, this is done by considering only the cell indices of the considered CAD and a function that encodes the inclusion of the cells in the set $S$ under consideration. 
We first introduce the general notion of a CAD tree.

\begin{definition}\label{def:CADTREE}
    A CAD tree of depth $n$ is a pair $\mathcal{T} = (T,L)$ where 
        \begin{itemize}
            \item $T$ is a labelled rooted odd-ary tree whose nodes are labelled by tuples. The root is labelled by the empty tuple $\varepsilon$ and if $I$ is the label of a node, then either $I$ is an $n$-tuple and the node is a leaf, or there exists $u_I \in \mathbb{N}$ such that the node has exactly $2u_I + 1$ sons labelled by $I : j$ with  $j \in \{1,\ldots,2u_I + 1\}$. As we continue, we identify a node with its label, thus seeing $T$ as a subset of $\bigcup_{k=0}^n(\mathbb{N}^*)^k$ endowed with the prefix relation.
            \item $L$ is a map defined recursively on $T$ from the leaves to the root by
                \[L(I) \begin{cases}
                    \in \{0,1\} &\text{ if $I$ is a leaf of $T$,}\\
                    = \big(L(I:1), \ldots, L(I:2u_I+1)\big) &\text{ otherwise.}
                \end{cases} \]   
        \end{itemize}    
\end{definition}

\begin{remark}
    The redundancy in the definition above is useful for writing \eqref{eqn:L} below easily. However, it is clear that the function $L$ is completely defined by its values on the leaves of $T$.
\end{remark}

\begin{definition}\label{def:treeCAD}
    For every $\mathscr{C} \in \text{CAD}(S)$, the CAD tree associated with $(\mathscr{C}, S)$ is the CAD tree of depth $n$ where the nodes of $T$ are the indices of the cells of $\mathscr{C}$ and where the function $L$ is defined on the leaves of $T$ by $L(I) = 1$ if $C_I \subset S$ and by $L(I) = 0$ if $C_I \subset S^c$. We denote it by Tree$(\mathscr{C} ,S)$ or Tree$(\mathscr{C})$ for short.
\end{definition}

\begin{example}\label{ex:diskMotiv}
    Let $S$ be the closed disk centered at the origin and of radius one. 
    Consider the CAD $\mathscr{C} = (\mathscr{C}_1,\mathscr{C}_2)$ adapted to $S$ where the five cells of $\mathscr{C}_1$ are defined by $C_1 = (-\infty,-1)$, $C_2 = \{-1\},$ $C_3 = (-1,1)$, $C_4 = \{1\}$, $C_5 = (1,+\infty)$ and where the cylinders above these cells are split by exactly the two functions defined on $C_2$ and $C_4$ by $0$ and by the two functions defined on $C_3$ by $\sqrt{1-x^2}$ and $-\sqrt{1-x^2}$.
    We also consider another CAD $\mathscr{C}'$ adapted to $S$. It is obtained by splitting the cell $C_3$ into three cells $ (-1,0), \{0\}, (0,1)$. The functions used to cut cylinders of this new CAD are the same as those for $\mathscr{C}$, subject to appropriate restrictions. 
    Figure \ref{fig:disk} represents these CADs and their associated CAD trees. 
    A leaf $I$ such that $L(I)$ is $0$ (resp. $1$) is represented in red (resp. green). Moreover, we shorten the labelling of the nodes in an obvious manner.
    \begin{figure}
        \includegraphics[width=0.48\linewidth]{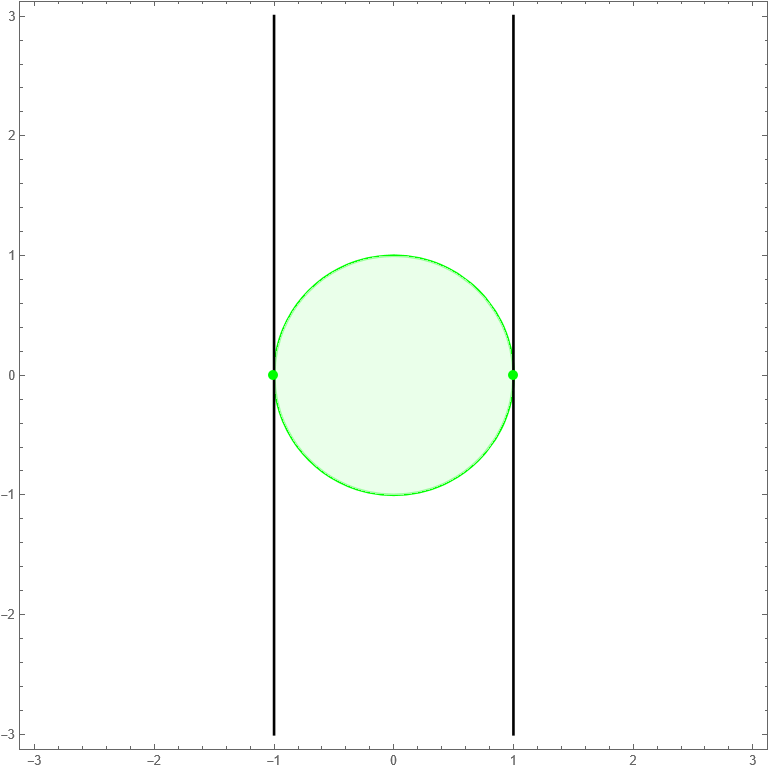}
        \includegraphics[width=0.48\linewidth]{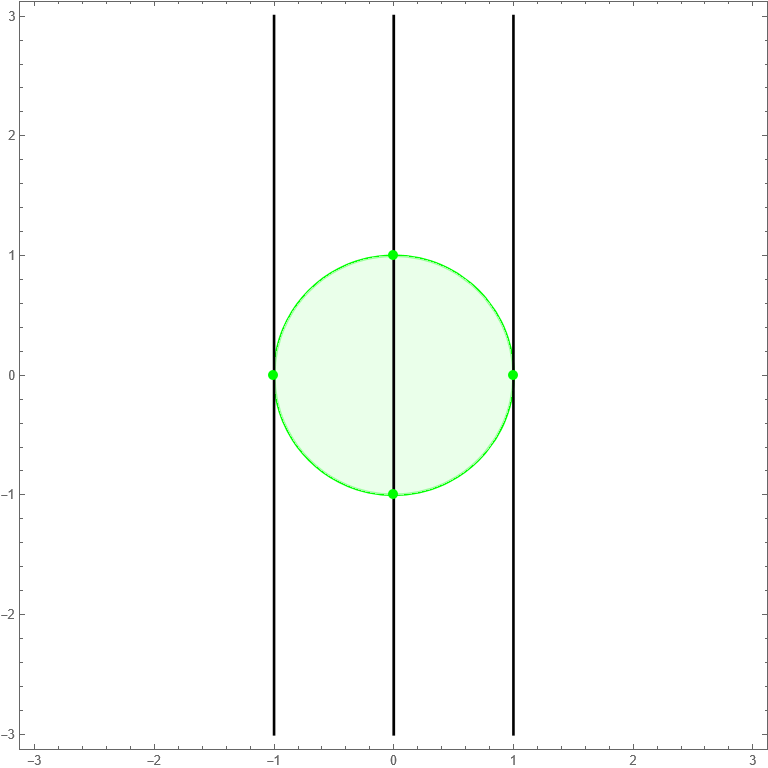}
        \includegraphics[width=\linewidth]{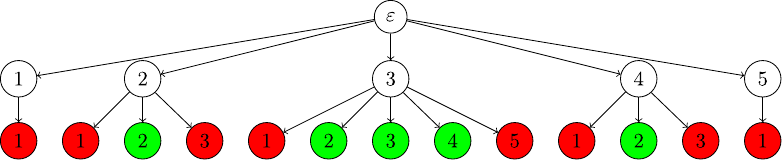}
        \includegraphics[width=\linewidth]{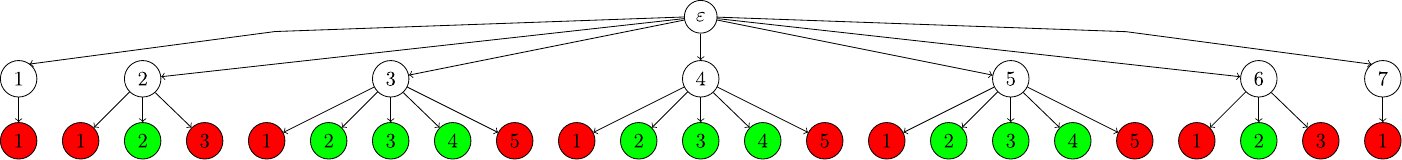}
        \caption{The CADs $\mathscr{C}$ and $\mathscr{C}'$ and their associated CAD trees}
        \label{fig:disk}
    \end{figure}
\end{example}

\subsection{Reductions}
The concepts of reductions, which we now introduce, can already be seen in Example \ref{ex:diskMotiv}. Indeed, the CAD $\mathscr{C}$ can be obtained by merging corresponding cells of $\mathscr{C}'$ above the third, fourth and fifth cylinders. Furthermore, the possibility to merge these cells of $\mathscr{C}'$ implies the following on the associated CAD trees:
\begin{itemize}
    \item the subtree with prefix 3 of Tree$(\mathscr{C})$ is identical to the subtrees with prefixes 3,4 and 5 of Tree$(\mathscr{C}')$;
    \item the subtrees with prefixes 4 and 5 of Tree$(\mathscr{C})$ are respectively identical to the subtrees with prefixes 6 and 7 of Tree$(\mathscr{C}')$. 
\end{itemize}
In this example, we will say that there is a reduction from $\text{Tree}(\mathscr{C}')$ to $\text{Tree}(\mathscr{C})$. It is obtained by removing subtrees with prefixes 4 and 5 and properly relabelling the other nodes (by subtracting $2$ to the 1-prefixes of subtrees with prefixes 6 and 7). In order perform these operations in the general case, we make use of the $k^\text{th}$ prefix map
    \[p_{k} \colon \bigcup_{l=0}^{+\infty}(\mathbb{N}^*)^l \to \bigcup_{l=0}^{+\infty}(\mathbb{N}^*)^l \colon (i_1,\ldots,i_l) \mapsto \begin{cases}
        (i_1,\ldots,i_l)&\text{ if $l < k$},\\
        (i_1,\ldots,i_k)&\text{ if $l\geq k$}.
       \end{cases}
    \]
When we merge a section with (even) index $A$ with two surrounding sectors in a level of a CAD, we have to merge the subtrees corresponding to these cells and relabel the whole tree accordingly. This is the purpose of the functions we introduce in the next definition.
\begin{definition}\label{def:auxpsi} 
    For every even tuple $A \in (\mathbb{N}^*)^k$ ($k \geq 1$),  we define 
    {\small\begin{align*}
        \psi_A \colon \bigcup_{l=0}^{+\infty}(\mathbb{N}^*)^l \to \bigcup_{l=0}^{+\infty}(\mathbb{N}^*)^l \colon I \mapsto \begin{cases}
            I - e_k \text{ if } p_k(I) = A,\\
            I - 2 e_k \text{ if } \exists m \in \mathbb{N}^* :  p_k(I) = A+me_k,\\
            I \text{ otherwise},
        \end{cases}
    \end{align*}}
    where $e_k$ is the $k^\text{th}$ unit vector of $\mathbb{R}^l$ for $l \geq k$.
\end{definition}

\begin{definition}\label{def:CADtreeRed}
    Consider a CAD tree $\mathcal{T} = (T,L)$ and an even node $A \in T \cap (\mathbb{N}^*)^k$. We say that $\psi_A$ induces a reduction rule on $\mathcal{T}$ if we have
    \begin{equation}\label{eqn:L}
        L(A - e_k) = L(A) = L(A + e_k).
    \end{equation}
    Then the induced reduction rule is denoted by $\Psi_{A}$ and the reduced CAD tree is given by $\mathcal{T}' =  (\psi_{A}(T), L')$ where $L'$ is defined on the leaves of $\psi_A(T)$ by $L \circ \psi_{A}^{-1}$. We write\footnote{Here we make use of the terminology on Abstract Reduction Systems from \cite{TRaAT}.} $\mathcal{T} \to \mathcal{T'}$ or $\mathcal{T}' \leftarrow \mathcal{T}$.
 \end{definition}

 The definition of $\psi_A$ and Condition \eqref{eqn:L} ensure that $\mathcal{T}'$ is a well-defined CAD tree. The existence of a reduction rule $\Psi_{A}$ from Tree$(\mathscr{C})$ characterizes the fact that the cylinders above $C_{A - e_k}, C_A$ and $C_{A + e_k}$ of $\mathscr{C}$ contain the same number of cells and the corresponding ones are all in $S$ or all in $S^c$. We will say that $\Psi_A$ induces a CAD reduction if we can ensure that the unions of cells prescribed by $\Psi_A$ give rise to a CAD.

\begin{definition}\label{def:CADRed}
    Let $\mathscr{C}$ be in $\text{CAD}(S)$. 
    A CAD tree reduction rule $\Psi_{A}$ from $\text{Tree}(\mathscr{C})$ to $\mathcal{T}'$ lifts to a CAD$(S)$ reduction rule $\Phi_A$ defined on $\mathscr{C}$ if 
    \[\mathscr{C}' = \left\{\bigcup_{I \in \text{Tree}(\mathscr{C}) \; : \; \psi_A(I) = I'} C_{I} \; \Big| \; I' \text{ leaf of } T'\right\}\]
    is a CAD adapted to $S$. In this case, the reduced CAD is given by $\mathscr{C}'$ and we write $\mathscr{C} \to \mathscr{C}'$ or $\mathscr{C}' \leftarrow \mathscr{C}$.
\end{definition}

The existence of the CAD tree reduction rule $\Psi_{A}$ is a necessary condition for the existence of the corresponding CAD reduction rule $\Phi_{A}$. It provides an algorithmically-friendly way to generate the candidates for CAD($S$) reductions. When this necessary condition is satisfied, the existence of $\Phi_A$ depends on the geometry of the CAD, in particular, but not only, on the adjacency of the cells.

\begin{example}
    In Example \ref{ex:diskMotiv}, the reduction rule $\Psi_4$ from Tree$(\mathscr{C}')$ to Tree$(\mathscr{C})$ exists and lifts to a reduction rule $\Phi_4$ from $\mathscr{C}'$ to $\mathscr{C}$. The reduction rule $\Phi_4$ consists in merging together the corresponding cells from the third, fourth and fifth cylinders of $\mathscr{C}'$, giving the CAD $\mathscr{C}$. 
    An example of a CAD tree reduction rule that does not lift to CAD reduction rule is studied in Example \ref{ex:trousers2}.
\end{example}
 
The following theorem enables us to study the poset $\text{CAD}(S)$ via CAD reductions.
We denote by $\stackrel{*}{\leftarrow}$ the reflexive-transitive closure of the relation $\leftarrow$ on $\text{CAD}(S)$.
\begin{theorem}\label{prop:lien-red-ordre}
    Let $\mathscr{C}, \mathscr{C} ' \in \text{CAD}(S)$. We have $\mathscr{C}' \stackrel{*}{\leftarrow} \mathscr{C}$ if and only if $\mathscr{C}' \preceq \mathscr{C}.$
\end{theorem}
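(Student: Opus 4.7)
The plan is to prove the two implications separately.

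For the direction $(\Rightarrow)$: Definition \ref{def:CADRed} writes every cell of a reduced CAD as a union of cells of the original, so each single step $\mathscr{C}\to\mathscr{C}''$ satisfies $\mathscr{C}''\preceq\mathscr{C}$. Induction on the length of a chain $\mathscr{C}\to\mathscr{C}^{(1)}\to\cdots\to\mathscr{C}^{(m)}=\mathscr{C}'$ together with the transitivity of $\preceq$ then yields $\mathscr{C}'\preceq\mathscr{C}$.

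For the direction $(\Leftarrow)$, assuming $\mathscr{C}'\preceq\mathscr{C}$, I argue by induction on the number of cells of $\mathscr{C}$. The case $\mathscr{C}=\mathscr{C}'$ is reflexivity; otherwise I will produce a single reduction $\mathscr{C}\to\mathscr{C}''$ with $\mathscr{C}'\preceq\mathscr{C}''$ and conclude by the inductive hypothesis applied to $(\mathscr{C}'',\mathscr{C}')$. Call a section $C_{J:2i}$ of $\mathscr{C}_k$ an \emph{extra} when it is strictly contained in a sector of $\mathscr{C}'_k$. Extras exist whenever $\mathscr{C}\neq\mathscr{C}'$: taking the smallest $k$ with $\mathscr{C}_k\neq\mathscr{C}'_k$, the partitions at level $k-1$ agree, so any cell in $\mathscr{C}_k\setminus\mathscr{C}'_k$ cannot be strictly contained in a section of $\mathscr{C}'_k$ (this would require a strictly larger base at level $k-1$) and must therefore be, or be bounded by, an extra.

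I then choose an extra $C_{J:2i}$ at a \emph{maximal} level $k$ and claim that $\Phi_{J:2i}$ realises the required reduction. By maximality, no section of $\mathscr{C}$ at a level greater than $k$ is an extra, so every cell of $\mathscr{C}$ lying in the cylinder above any of the three adjacent cells $C_{J:2i-1}$, $C_{J:2i}$, $C_{J:2i+1}$ is a restriction of a cell of $\mathscr{C}'$ lying in the single cylinder above the $\mathscr{C}'$-sector $C^*$ that contains $C_{J:2i}$. The three subtrees of $\text{Tree}(\mathscr{C},S)$ rooted at those cells are therefore isomorphic, and the three leaves at each corresponding position sit inside a common cell of $\mathscr{C}'$, hence carry the same label in $\{0,1\}$: condition \eqref{eqn:L} is satisfied and $\Psi_{J:2i}$ is a valid CAD tree reduction. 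The semi-algebraic continuous functions defining the sections of the three columns, being restrictions of the same functions defining $\mathscr{C}'$ over $C^*$, glue into functions of a CAD $\mathscr{C}''$ adapted to $S$; and every cell of $\mathscr{C}'$ stays a union of cells of $\mathscr{C}''$ because the three merged cells are contained in the single $\mathscr{C}'$-cell $C^*$. The main obstacle is precisely this selection: an arbitrary extra may sit under non-isomorphic cylinders in $\mathscr{C}$, blocking the tree reduction; choosing an extra at the deepest level forces all higher structure to be dictated by $\mathscr{C}'$, thereby ensuring the tree reduction condition.
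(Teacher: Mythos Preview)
Your proof is correct, but it runs in the opposite direction from the paper's. The paper, assuming $\mathscr{C}'\prec\mathscr{C}$, works from the \emph{bottom up}: it locates the \emph{smallest} level $p$ with $\mathscr{C}'_p\prec\mathscr{C}_p$, picks a section $C_{J:2l}$ of $\mathscr{C}_p$ lying inside a sector $C'_{J:2j+1}$ of $\mathscr{C}'_p$, and \emph{inserts} this section into $\mathscr{C}'$ (slicing all higher cylinders accordingly) to obtain $\mathscr{D}$ with $\mathscr{C}'\leftarrow\mathscr{D}\preceq\mathscr{C}$. The single reduction $\mathscr{D}\to\mathscr{C}'$ is then automatic by construction; the residual work is an induction showing $\mathscr{D}_k\preceq\mathscr{C}_k$ for $k\geq p$, which the paper only states.

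You instead work from the \emph{top down}: you pick an extra at the \emph{maximal} level $k$ and \emph{remove} it from $\mathscr{C}$ to obtain $\mathscr{C}''$ with $\mathscr{C}'\preceq\mathscr{C}''\leftarrow\mathscr{C}$. The price you pay is that the merge must be justified: you need that above level $k$ the cells of $\mathscr{C}$ over each of $C_{J:2i-1},C_{J:2i},C_{J:2i+1}$ are precisely the restrictions of the cells of $\mathscr{C}'$ over the enclosing sector $C^*$, which you derive (correctly, though tersely) from the absence of extras above level $k$. Once that is in hand, both the tree condition \eqref{eqn:L} and the continuous gluing come for free, and $\mathscr{C}'\preceq\mathscr{C}''$ is immediate because each merged triple lies in a single $\mathscr{C}'$-cell.

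Both arguments hinge on the same dichotomy---a section of the finer CAD sitting inside a sector of the coarser one---but exploit it at opposite ends of the level range. The paper's route makes the one-step reduction trivial and defers the order comparison; yours front-loads the geometric analysis and makes the order comparison trivial. Either is a complete proof.
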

\begin{proof} By definition of CAD reductions, if $\mathscr{C}' \stackrel{*}{\leftarrow} \mathscr{C}$, then all the cells of $\mathscr{C}'$ are unions of cells of $\mathscr{C}$. Hence, we have $\mathscr{C}' \preceq \mathscr{C}.$  

Suppose now that $\mathscr{C}' \prec \mathscr{C}$. Since the set 
    $\{\mathscr{B} \in \text{CAD}(S) \; | \; \mathscr{C}' \prec \mathscr{B} \prec \mathscr{C}\}$ is finite, it is sufficient to prove that there exists $\mathscr{D} \in \text{CAD}(S)$ such that $\mathscr{C}' \leftarrow \mathscr{D}\preceq \mathscr{C}$. We will build such a CAD $\mathscr{D}$ by adding a suitable section of $\mathscr{C}$ to the sections of $\mathscr{C}'$.
    
    Specifically, we first notice that $\mathscr{C}'_k \preceq \mathscr{C}_k$ for all $k \in \{0,\ldots, n-1\}$ and $\mathscr{C}'_n \prec \mathscr{C}_n$. It follows that there exists $ p \in \{1,\ldots,n\}$ such that $
        \mathscr{C}'_k = \mathscr{C}_k$ if $k < p$ and $\mathscr{C}_{k}' \prec \mathscr{C}_k$ if $k \geq p$. By definition, there exists a cell $C'_{J:s} \in \mathscr{C}'_p$ which is a non trivial union of cells of $\mathscr{C}_p$, i.e. we can write $C'_{J:s} = \bigcup_{a \in A} C_{I_a : r_a}$ where $A$ is a finite set with at least two elements. Both sides of the equality project onto the same cell $C'_J = C_J \subset \mathbb{R}^{p-1}$. Hence, we have $I_a = J$ for all $a \in A$. 
        
        If $x \in C'_J$, we observe that the set $\{y \in \mathbb{R} \; | \; (x,y) \in C'_{J:s}\}$ cannot be reduced to a singleton. It is then an open interval, which means that $C'_{J:s}$ is a sector. Furthermore, there exists $a\in A$ such that $r_a$ is even and is thus the index of a section of $\mathscr{C}_p$. Thus, we can write $s = 2j+1, r_a = 2l$ for $j \in \{0,\ldots, u'_J\}, l \in \{1,\ldots,u_J\}$.
        By definition, the sector $C'_{J:2j+1}$ is defined using the functions $\xi'_{J:2j}$ and $\xi'_{J:2(j+1)}$. Since $C_{J:2l} \subset C'_{J:2j+1}$, we have $\xi'_{J:2j} < \xi_{J:2l} < \xi'_{J:2(j+1)}$ over $C'_J$.

        Then, we construct the CAD $\mathscr{D} = (\mathscr{D}_1,\ldots,\mathscr{D}_n)$ explicitly by
    \begin{itemize}
        \item setting $\mathscr{D}_k = \mathscr{C}_k$ if $k < p$;
        \item replacing the sector $C'_{J:2j+1}$ in $\mathscr{C}'_p$ by the three cells obtained by slicing it with $\xi_{2l}$ to obtain $\mathscr{D}_p$;
        \item slicing accordingly the cells of $\mathscr{C}'_k$ for $k > p$ above $C'_{J:2j+1}$ to obtain $\mathscr{D}_k$.
    \end{itemize} 
    By construction, there exists a CAD$(S)$ reduction rule $\Phi_{J:2(j+1)}$ from $\mathscr{D}$ to $\mathscr{C}'$.  
    We finally show by induction that $\mathscr{D}_k \preceq \mathscr{C}_k$ for $k \in \{p,\ldots,n\}$.
\end{proof}

\begin{example}\label{ex:hasseDisk}
    We consider the same disk $S$ and CADs $\mathscr{C}$ and $\mathscr{C}'$ as in Example \ref{ex:diskMotiv}. 
    We define the CAD $\mathscr{C}''$ depicted in the top of Figure~\ref{fig:hasse} by slicing the cylinders above $C'_3, C'_4$ and $C'_5$ by means of the function $f$ defined on the interval $(-1,1)$ by $f(x) =  1 - (2(x^2-1))^{-1}$.
    Figure~\ref{fig:hasse} shows the Hasse diagram of elements of $\text{CAD}(S)$ smaller than or equal to $\mathscr{C}''$. 
    For instance, $\Phi_{46}$ merges the cells surrounding the third section of the fourth cylinder.
    \begin{figure}[h]
        \center 
        \includegraphics[scale=0.2]{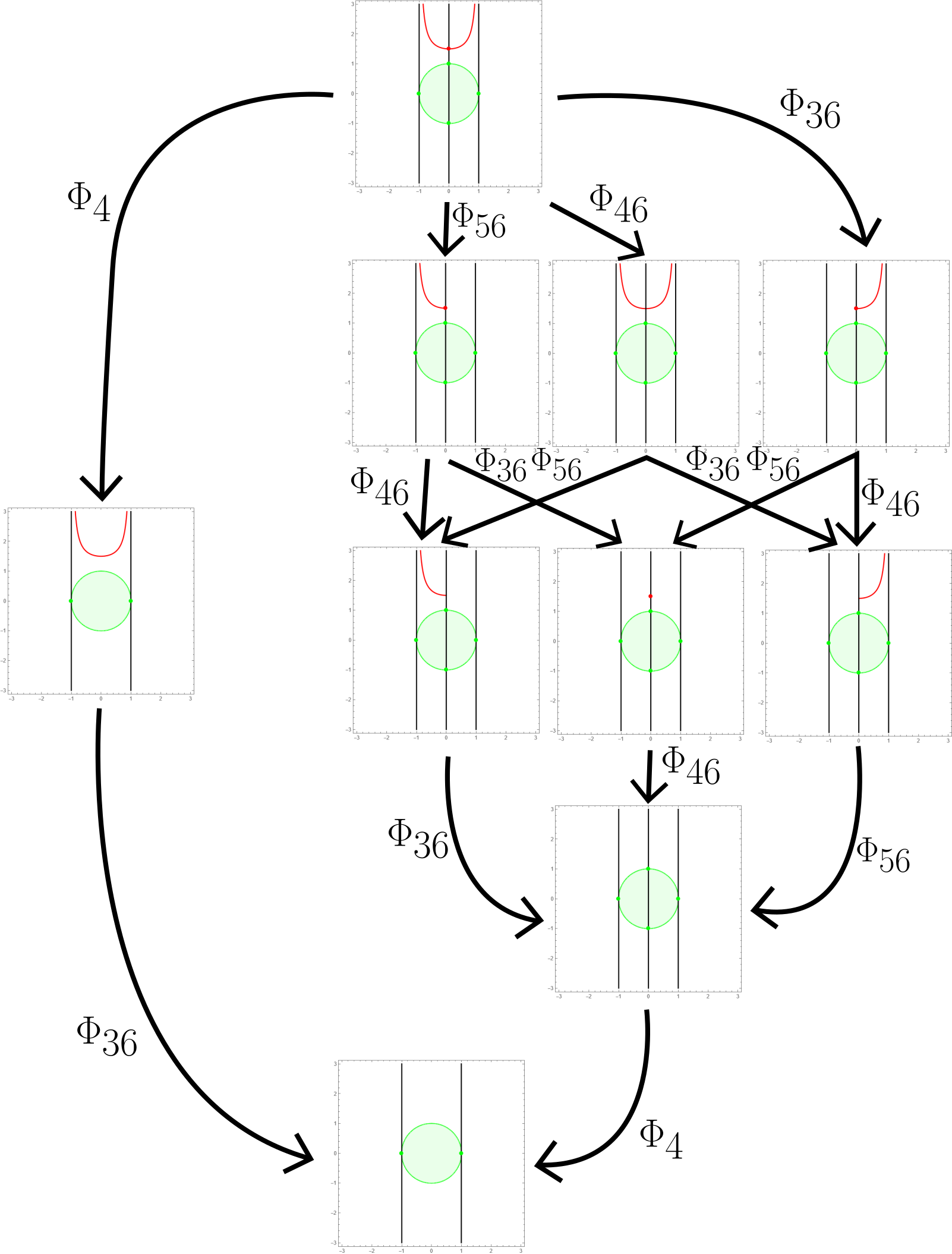}
        \caption{Hasse diagram of elements of $\text{CAD}(S)$ smaller than or equal to $\mathscr{C}''$}
        \label{fig:hasse}
    \end{figure}
\end{example}

\begin{algorithm}[H]
    \caption{\texttt{Minimal}($\mathscr{C}, S$)}
    \label{algo:Min}
\begin{algorithmic}[1]
    \Require $\mathscr{C} \in \text{CAD}(S)$ 
    \Ensure A minimal CAD $\mathcal{M}$ adapted to $S$ such that $\mathcal{M} \preceq \mathscr{C}$ 
    \State Construct the set Red(Tree($\mathscr{C}$)) of reduction rules from Tree($\mathscr{C}$)
    \If{$\exists \Psi \in \text{Red(Tree}(\mathscr{C}))$ s.t. $\Psi$ lifts to a CAD($S$) reduction rule $\Phi$ from $\mathscr{C}$ to $\mathscr{C}'$}
    \State \Return \texttt{Minimal}($\mathscr{C}', S$)
    \Else{}
    \State \Return $\mathscr{C}$
    \EndIf
    \end{algorithmic}
\end{algorithm}

From Theorem \ref{prop:lien-red-ordre}, we deduce immediately Algorithm \ref{algo:Min} which gives an effective way to build a minimal CAD adapted to $S$ as a post-processing operation once an element of $\text{CAD}(S)$ is computed. 
We now apply this algorithm to the CADs $\mathscr{C}$ and $\mathscr{C}'$ adapted to the Trousers $\mathbb{T}$ discussed in Proposition \ref{ex:trousers1}.

\begin{example}\label{ex:trousers2}
    In order to apply  Algorithm \ref{algo:Min} to $(\mathscr{C},\mathbb{T})$ and $(\mathscr{C}',\mathbb{T})$, we compute the sets $\text{Red(Tree}(\mathscr{C}))$ and $\text{Red(Tree}(\mathscr{C}'))$. In this case, we can immediately identify on Figure~\ref{fig:trousers2} the even nodes of these CAD trees satisfying Condition \eqref{eqn:L} and we obtain
    \begin{align*}
        \text{Red(Tree}(\mathscr{C})) = \{\Psi_{12}\} \;\text{ and }\;
        \text{Red(Tree}(\mathscr{C}')) = \{\Psi_{32}\}.
    \end{align*}

    To conclude, it is sufficient to show that neither of these two CAD tree reduction rules lifts to a CAD($\mathbb{T}$) reduction rule. Suppose that $\Psi_{12}$ lifts to a reduction rule from $\mathscr{C}$ to a CAD $\mathcal{D}$ adapted to $\mathbb{T}$. The section $D_{112}$ of $\mathcal{D}$ is the union of cells of $\mathscr{C}$ whose indices are mapped to $112$ by $\psi_{12}$, which are the sections $C_{112}, C_{122}$ and $C_{132}$. In other words, $D_{112}$ is the graph of a continuous semi-algebraic function defined over the union of $C_{11}, C_{12}$ and $C_{13}$.
    This function is obviously not continuous by definition of $\mathbb{T}$. 
    Similar considerations show that $\Psi_{32}$ does not lift to a CAD reduction for $\mathscr{C}'$.
    Hence, the CADs returned by \texttt{Minimal}$(\mathscr{C},T)$ and \texttt{Minimal}$(\mathscr{C}',T)$ are respectively $\mathscr{C}$ and $\mathscr{C}'$. This shows that these CAD are minimal. 
    \begin{figure}[h]
        \center 
        \includegraphics[scale=0.5]{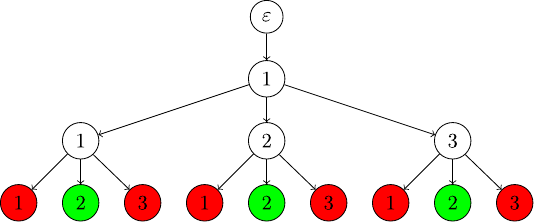}
        \includegraphics[scale=0.5]{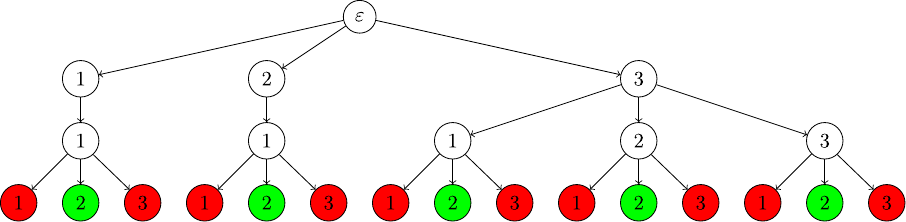}
        \caption{Tree($\mathscr{C}$) and Tree($\mathscr{C}'$)}
        \label{fig:trousers2}
    \end{figure}
\end{example}

\begin{remark}
    In the previous example we just had to show that two particular partitions were not CADs.
    This should be compared with the large number of elements in SSP($\mathscr{C}$) and SSP($\mathscr{C}'$) that would have been processed by direct inspection (see Remark \ref{rem:naive}). 
\end{remark}

\subsection{Minimum CAD and confluence}

In Proposition \ref{ex:trousers1} we proved that $\text{CAD}(\mathbb{T})$ has no minimum by considering two distinct minimal adapted CADs $\mathscr{C}$ and $\mathscr{C}'$. These CADs can actually be obtained by reductions of a common CAD $\overline{\mathscr{C}} \in \text{CAD}(\mathbb{T})$ which is depicted at the top of Figure \ref{fig:trousers3} while $\mathscr{C}$ and $\mathscr{C}'$ are at the bottom. 

\begin{figure}
    \center 
    \includegraphics[scale=0.2]{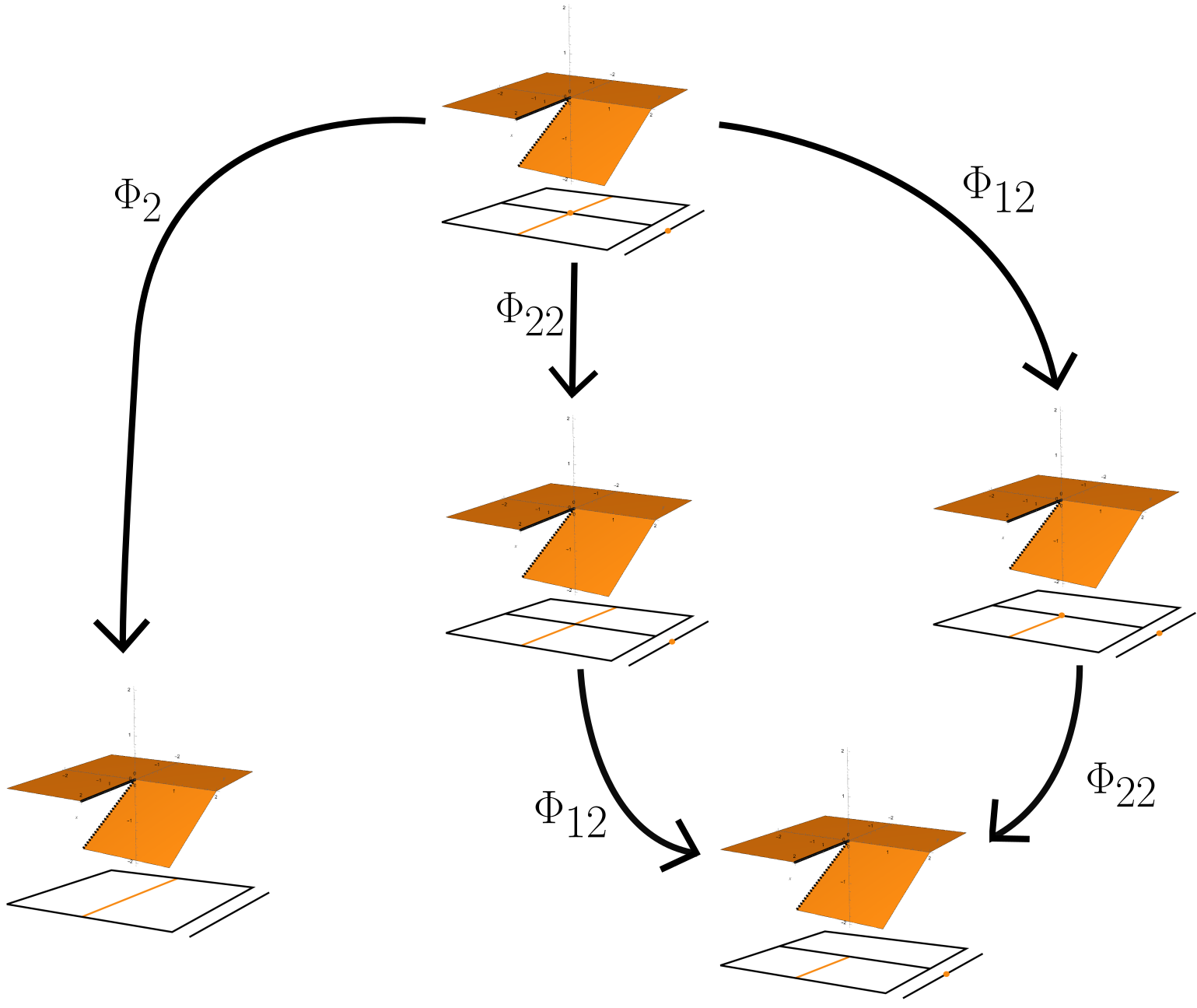}
    \caption{Non-confluence of the trousers}
    \label{fig:trousers3}
\end{figure}

We thus have $\mathscr{C} \stackrel{*}{\leftarrow} \overline{\mathscr{C}} \stackrel{*}{\to} \mathscr{C}'$ but in view of Theorem~\ref{prop:lien-red-ordre}, $\mathscr{C}$ and $\mathscr{C}'$ cannot be reduced to a common CAD. This observation relates the existence of minimum element to the confluence property of $\text{CAD}(\mathbb{T})$. We now show that this is a general phenomenon.

We consider $\text{CAD}(S)$ as the sets of objects of an abstract reduction system (see \cite{TRaAT}) endowed with the reduction relation introduced in Definition \ref{def:CADRed}. Confluence, also known as Church-Rosser or diamond property, appears in various areas of mathematics and computer science. We recall its definition in this particular case.

\begin{definition}
    The reduction system $\text{CAD}(S)$ is globally confluent if for all $\mathscr{C}, \overline{\mathscr{C}}, \mathscr{C}'  \in \text{CAD}(S)$ such that $\mathscr{C} \stackrel{*}{\leftarrow} \overline{\mathscr{C}} \stackrel{*}{\to} \mathscr{C}'$, there exists $\underline{\mathscr{C}} \in \text{CAD}(S)$ such that $\mathscr{C} \stackrel{*}{\to} \underline{\mathscr{C}} \stackrel{*}{\leftarrow} \mathscr{C}'$. It is 
    locally confluent if for all $\mathscr{C}, \overline{\mathscr{C}}, \mathscr{C}'  \in \text{CAD}(S)$ such that $\mathscr{C} \stackrel{}{\leftarrow} \overline{\mathscr{C}} \stackrel{}{\to} \mathscr{C}'$, there exists $\underline{\mathscr{C}} \in \text{CAD}(S) : \mathscr{C} \stackrel{*}{\to} \underline{\mathscr{C}} \stackrel{*}{\leftarrow} \mathscr{C}'$.
\end{definition}
    
Observing that there exists no infinite chain of reductions in $\text{CAD}(S)$ and applying Newman's Lemma (\cite[Lemma 2.7.2]{TRaAT}) we obtain the following result. 
\begin{lemma}
    The rewriting system $\text{CAD}(S)$ is globally confluent if and only if it is locally confluent.
\end{lemma}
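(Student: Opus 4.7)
The statement has the form of a standard consequence of Newman's Lemma, so the plan is to verify the two hypotheses needed to invoke it: the trivial direction, and termination of the reduction relation. The forward implication (global confluence implies local confluence) is immediate from the definitions, since any one-step reduction is in particular a $\stackrel{*}{\to}$ reduction; I would dispatch this in a single sentence.

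For the non-trivial direction, the plan is to appeal to Newman's Lemma (\cite[Lemma 2.7.2]{TRaAT}), which asserts that a terminating abstract reduction system is globally confluent if and only if it is locally confluent. The only hypothesis to check is termination, i.e.\ that there exists no infinite chain $\mathscr{C}^{(0)} \to \mathscr{C}^{(1)} \to \mathscr{C}^{(2)} \to \cdots$ in $\text{CAD}(S)$. For this, I would introduce the cardinality function $\mathscr{C} \mapsto |\mathscr{C}_n|$ counting the cells in the top-level partition, and observe that a single reduction rule $\Phi_A$ replaces the three cells $C_{A-e_k}, C_A, C_{A+e_k}$ (and their cylindrical descendants) by strictly fewer unions of cells. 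More precisely, the reduced CAD $\mathscr{C}'$ satisfies $\mathscr{C}' \prec \mathscr{C}$ by Theorem \ref{prop:lien-red-ordre}, and since $|\mathscr{C}'_n| < |\mathscr{C}_n|$ strictly, any chain of reductions has length bounded by $|\mathscr{C}^{(0)}_n|$.

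Concretely, the core of the argument reduces to the estimate that every application of a reduction rule $\Phi_A$ strictly decreases the total cell count, which is a non-negative integer; hence no infinite descending chain can exist. With termination in hand, Newman's Lemma then directly yields the equivalence between local and global confluence for $(\text{CAD}(S), \to)$.

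The proof is essentially a citation followed by a short verification, so there is no serious obstacle; the only point that requires any care is justifying termination rigorously, and this is where I would be explicit, pointing to the strict decrease of $|\mathscr{C}_n|$ under each single reduction step (which follows directly from Definition \ref{def:CADRed}, where the cells indexed by $A-e_k, A, A+e_k$ at level $k$ and their cylindrical descendants get merged into strictly fewer cells).
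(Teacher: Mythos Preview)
Your proposal is correct and takes essentially the same approach as the paper: the paper's proof consists of the single sentence ``Observing that there exists no infinite chain of reductions in $\text{CAD}(S)$ and applying Newman's Lemma (\cite[Lemma 2.7.2]{TRaAT}) we obtain the following result,'' so your plan of verifying termination via the strict decrease of $|\mathscr{C}_n|$ and then invoking Newman's Lemma is exactly in line with (and indeed slightly more explicit than) the paper's own argument.
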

In view of this lemma, we say that $\text{CAD}(S)$ is confluent if it is globally confluent or locally confluent.
We can now state and prove the main result of this section which is reminiscent of the characterization of Gröbner bases by means of confluence (see \cite[Chapter~8]{TRaAT}).

\begin{theorem}
    There exists a minimum in $\text{CAD}(S)$ if and only if the reduction system $\text{CAD}(S)$ is confluent.
\end{theorem}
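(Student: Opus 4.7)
The plan is to prove both directions using Theorem~\ref{prop:lien-red-ordre} as a bridge between the reduction relation and the refinement order, together with Proposition~\ref{prop:uniqueMin} which reduces the existence of a minimum to the uniqueness of a minimal CAD.

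For the direct implication, assume $\mathscr{M}$ is the minimum of $\text{CAD}(S)$ and take any $\mathscr{C}, \overline{\mathscr{C}}, \mathscr{C}' \in \text{CAD}(S)$ with $\mathscr{C} \stackrel{*}{\leftarrow} \overline{\mathscr{C}} \stackrel{*}{\to} \mathscr{C}'$. Since $\mathscr{M} \preceq \mathscr{C}$ and $\mathscr{M} \preceq \mathscr{C}'$, Theorem~\ref{prop:lien-red-ordre} gives $\mathscr{C} \stackrel{*}{\to} \mathscr{M} \stackrel{*}{\leftarrow} \mathscr{C}'$, so global confluence holds with $\underline{\mathscr{C}} = \mathscr{M}$.

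For the converse, I would assume confluence and argue that there is a unique minimal element, which, by Proposition~\ref{prop:uniqueMin}, is equivalent to the existence of a minimum. Let $\mathscr{M}, \mathscr{M}' \in \text{CAD}(S)$ be two minimal CADs. The key step is to exhibit a common refinement $\overline{\mathscr{C}} \in \text{CAD}(S)$ of $\mathscr{M}$ and $\mathscr{M}'$. For this, I would invoke Remark~\ref{rem:Collins} applied to the finite collection of cells of $\mathscr{M}$ and $\mathscr{M}'$ (all of which are semi-algebraic): there exists a CAD $\overline{\mathscr{C}}$ simultaneously adapted to every one of these cells. Any such $\overline{\mathscr{C}}$ satisfies $\mathscr{M} \preceq \overline{\mathscr{C}}$ and $\mathscr{M}' \preceq \overline{\mathscr{C}}$ by Definition~\ref{def:order}, and belongs to $\text{CAD}(S)$ because $S$ is a union of cells of $\mathscr{M}$.

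Applying Theorem~\ref{prop:lien-red-ordre} to $\overline{\mathscr{C}}$ yields $\mathscr{M} \stackrel{*}{\leftarrow} \overline{\mathscr{C}} \stackrel{*}{\to} \mathscr{M}'$. Confluence then provides $\underline{\mathscr{C}} \in \text{CAD}(S)$ with $\mathscr{M} \stackrel{*}{\to} \underline{\mathscr{C}} \stackrel{*}{\leftarrow} \mathscr{M}'$. Since $\mathscr{M}$ is minimal, no non-trivial reduction starts from $\mathscr{M}$ (such a reduction would produce a strictly smaller CAD in $\text{CAD}(S)$), so the chain $\mathscr{M} \stackrel{*}{\to} \underline{\mathscr{C}}$ must be empty, giving $\underline{\mathscr{C}} = \mathscr{M}$. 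Symmetrically $\underline{\mathscr{C}} = \mathscr{M}'$, and we conclude $\mathscr{M} = \mathscr{M}'$, establishing uniqueness.

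The main subtlety is the backward direction and, in particular, the justification that we have the right to pick a common refinement inside $\text{CAD}(S)$; this is not automatic from the abstract rewriting viewpoint but is guaranteed here by the existential content of Collins' theorem. Once this is set up, the minimality of $\mathscr{M}$ and $\mathscr{M}'$ forces the confluence diagram to collapse to a single point, which is exactly what we need.
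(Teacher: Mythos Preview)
Your proof is correct and follows essentially the same route as the paper: both directions use Theorem~\ref{prop:lien-red-ordre} to translate between $\preceq$ and $\stackrel{*}{\to}$, and the converse uses Remark~\ref{rem:Collins} to produce a common refinement of two minimal CADs, then lets confluence and minimality collapse the diagram. The only cosmetic difference is that the paper checks local confluence in the forward direction while you check global confluence directly; since the two have already been identified via Newman's Lemma, this is immaterial.
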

\begin{proof}         
    Assume that $\text{CAD}(S)$ has a minimum $\mathcal{M}$. For every $\mathscr{C},\overline{\mathscr{C}}$ and  $\mathscr{C}' \in \text{CAD}(S)$ such that  $\mathscr{C} \leftarrow \overline{\mathscr{C}} \to \mathscr{C}'$, $\mathcal{M}$ is smaller than or equal to $\mathscr{C}$ and $\mathscr{C}'$. Using Theorem \ref{prop:lien-red-ordre}, this is equivalent to $\mathscr{C} \stackrel{*}{\to} \mathcal{M} \stackrel{*}{\leftarrow} \mathscr{C}'$, so $\text{CAD}(S)$ is confluent.

    We assume now that $\text{CAD}(S)$ is confluent and show that it has a minimum. By Proposition \ref{prop:uniqueMin}, it is sufficient to show that there is at most one minimal element. If $\mathscr{C}$ and $\mathscr{C}'$ are two minimal elements of $\text{CAD}(S)$ we obtain a common finer CAD $\overline{\mathscr{C}}$ of $\mathscr{C}$ and $\mathscr{C}'$ by building a CAD adapted to all the cells of $\mathscr{C}$ and all the cells of $\mathscr{C}'$ (see Remark \ref{rem:Collins}). Hence, we have $\mathscr{C} \stackrel{*}{\leftarrow} \overline{\mathscr{C}} \stackrel{*}{\to} \mathscr{C}.$  By confluence there exists $\underline{\mathscr{C}} \in \text{CAD}(S) : \mathscr{C} \stackrel{*}{\to} \underline{\mathscr{C}} \stackrel{*}{\leftarrow} \mathscr{C}'$. By minimality of $\mathscr{C}$ and $\mathscr{C}'$, we have $\mathscr{C} =\underline{\mathscr{C}}= \mathscr{C}'$, as requested.  
\end{proof}

\begin{acks}
The authors would like to express their gratitude to J. H. Davenport for his advice during the elaboration of this work. They are grateful to the anonymous reviewers for numerous insightful comments. It is a pleasure to thank B. Boigelot, M. England and P. Fontaine for fruitful discussions. 
  P.~Mathonet, L.~Michel and N.~Zenaïdi are supported by the FNRS-DFG PDR Weaves (SMT-ART) grant 40019202.
\end{acks}

\bibliographystyle{ACM-Reference-Format}
\bibliography{references}

\end{document}